\def\ps@pprintTitle{%
 \let\@oddhead\@empty
 \let\@evenhead\@empty
 \def\@oddfoot{}%
 \let\@evenfoot\@oddfoot}
\newcommand{\al}{\alpha}
\newcommand{\be}{\beta}
\newcommand{\de}{\delta}
\newcommand{\vep}{\varepsilon}
\newcommand{\ga}{\gamma}
\newcommand{\la}{\lambda}
\newcommand{\om}{\omega}
\newcommand{\si}{\sigma}
\newcommand{\vp}{\varphi}
\newcommand{\ze}{\zeta}
\newcommand{\Si}{\Sigma}
\newcommand{\Om}{\Omega}
\newcommand{\bh}{\mathbf{h}}
\newcommand{\bn}{\mathbf{n}}
\newcommand{\bs}{\mathbf{s}}
\newcommand{\bx}{\mathbf{x}}
\newcommand{\by}{\mathbf{y}}
\newcommand{\bv}{\mathbf{v}}
\newcommand{\bS}{\mathbf S}
\newcommand{\bsi}{{\boldsymbol{\si}}}
\newcommand{\bxi}{{\boldsymbol{\xi}}}
\newcommand{\hde}{\hat{\de\vphantom{q'}}}
\newcommand{\CC}{{\mathbb C}}
\newcommand{\RR}{{\mathbb R}}
\newcommand{\ZZ}{{\mathbb Z}}
\newcommand{\cE}{{\mathcal E}}
\newcommand{\cH}{{\mathcal H}}
\newcommand{\cN}{{\mathcal N}}
\newcommand{\cP}{{\mathcal P}}
\newcommand{\cZ}{{\mathcal Z}}
\def\Hsc{H^{\mathrm{sc}}}
\def\Zsc{Z^{\mathrm{sc}}}
\newcommand{\tot}{\mathrm{tot}}
\newcommand{\pd}{\partial}
\def\ket#1{|#1\rangle}
\let\ds\displaystyle
\newcommand{\ms}{\mspace{1mu}}
\renewcommand{\le}{\leqslant}
\renewcommand{\ge}{\geqslant}
\def\bbuildrel#1_#2^#3{\mathrel{\mathop{\kern0pt #1}\limits_{#2}^{#3}}}
\newcommand{\tends}[1]{\bbuildrel{\hbox to 2em{\rightarrowfill}}_{#1}^{}}
\newcommand{\erf}{\operatorname{erf}}
\newcommand{\tr}{\operatorname{tr}}
\newcommand{\card}{\operatorname{card}}
\newcommand{\res}{\operatorname{Res}}
\newcommand{\iu}{\mathrm i}
\newcommand{\e}{\mathrm e}
\newcommand{\diff}{\mathrm{d}}
\newcommand{\Iff}{\iff}
\renewcommand{\Im}{\operatorname{Im}}
\newcommand{\sn}{\operatorname{sn}}
\newcommand{\cn}{\operatorname{cn}}
\newcommand{\en}{\enspace}
\newcommand{\all}{\forall}
\newcommand{\pdf}[2]{\frac{\partial #1}{\partial #2}}
\newcommand{\sump}[1]{{\sum_{#1}}'}
\newtheorem{thm}{Theorem}
\newtheorem{lem}{Lemma}
\newcommand{\HS}{\mbox{for the HS chain}}
\newcommand{\PF}{\mbox{for the PF chain}}
\begin{document}
\begin{frontmatter}
  \title{A new perspective on the integrability of Inozemtsev's\\ elliptic spin chain}
\author{Federico Finkel}
\author{Artemio Gonz\'alez-L\'opez\corref{cor}}
\ead{artemio@ucm.es}

\cortext[cor]{Corresponding author}

\address{Departamento de F\'\i sica Te\'orica II, Universidad Complutense de Madrid, 28040 Madrid,
  Spain}
%
%\date{}
%
\begin{abstract}
  The aim of this paper is studying from an alternative point of view the integrability of the
  spin chain with long-range elliptic interactions introduced by Inozemtsev. Our analysis
  relies on some well-established conjectures characterizing the chaotic vs.~integrable
  behavior of a quantum system, formulated in terms of statistical properties of its spectrum.
  More precisely, we study the distribution of consecutive levels of the (unfolded) spectrum, the
  power spectrum of the spectral fluctuations, the average degeneracy, and the equivalence to a
  classical vertex model. Our results are consistent with the general consensus that this model is
  integrable, and that it is closer in this respect to the Heisenberg chain than to its
  trigonometric limit (the Haldane--Shastry chain). On the other hand, we present some numerical
  and analytical evidence showing that the level density of Inozemtsev's chain is asymptotically
  Gaussian as the number of spins tends to infinity, as is the case with the Haldane--Shastry
  chain. We are also able to compute analytically the mean and the standard deviation of the
  spectrum, showing that their asymptotic behavior coincides with that of the Haldane--Shastry
  chain.
\end{abstract}
\begin{keyword}
  Spin chains with long-range interactions\sep Integrability vs.~quantum chaos
\end{keyword}
\end{frontmatter}
\numberwithin{equation}{section}
\section{Introduction}
The celebrated spin chain with long-range interactions introduced independently by
Haldane~\cite{Ha88} and Shastry~\cite{Sh88} in 1988 as a toy model for the one-dimensional Hubbard
Hamiltonian has attracted considerable interest in Mathematical Physics, due to its remarkable
integrability and solvability properties. To name only a few, the $\mathrm{su}(m)$
Haldane--Shastry (HS) chain is completely integrable~\cite{FM93}, it is invariant under the
Yangian $Y\big(\mathrm{su}(m)\big)$ \cite{HHTBP92}, its partition function can be exactly
evaluated for an arbitrary number of spins~\cite{FG05}, and it is equivalent to a simple classical
vertex model~\cite{BBH10}. As a matter of fact, the latter properties can be ultimately traced
back to the connection of this chain to the {\em dynamical} spin Sutherland (trigonometric)
model~\cite{HH92} via the so-called ``freezing trick'' of Polychronakos~\cite{Po93}. Roughly
speaking, in the strong coupling limit the dynamical degrees of freedom of the latter model
decouple from the spin ones, which are in turn governed by the Hamiltonian of the HS chain. Thus,
for instance, the chain's spectrum can be obtained essentially by ``modding out'' the energies of
the scalar Sutherland model~\cite{Su71,Su72} from the spectrum of its spin counterpart.

The fact that the HS chain can be obtained in a suitable limit from a dynamical spin model is
perhaps the feature that sets it apart from other integrable models like, e.g., the Heisenberg
chain. This idea has also been successfully applied to the Calogero (rational)~\cite{Ca71,MP93}
and Inozemtsev (hyperbolic)~\cite{IM86,In96} integrable spin dynamical models, which respectively
yield the Polychronakos--Frahm~\cite{Po93,Fr93} and Frahm--Inozemtsev~\cite{FI94} chains. The
Haldane--Shastry, Polychronakos--Frahm and Frahm--Inozemtsev chains (which shall be collectively
referred to as spin chains of HS type), and their corresponding dynamical models, are all
associated with the $A_{N-1}$ root system~\cite{OP83}, where $N$ is the number of spins. As is
well known, there are generalizations of these models and of their related spin chains to all the
other (extended) root systems (see, e.g.,~\cite{OP83}), although we shall not deal with these more
general models in this paper.

The rational, trigonometric and hyperbolic models of Calogero--Sutherland (CS) type associated
with the $A_{N-1}$ root system are all limiting cases of a more general model with a two-body
interaction potential proportional to a Weierstrass $\wp$ function with suitable
periods~\cite{OP83}. This elliptic CS model, which can be regarded as the $N$-dimensional version
of the Lam\'e potential, is also completely integrable~\cite{OP83}. On the other hand, in spite of
extensive work by many authors (see, e.g.,~\cite{Di93,FV97,Ta00,KT02,La10}), there are no general
explicit formulas for the eigenvalues and eigenfunctions of this model. Even less is known about
its spin version, for which only the lowest integrals of motion~\cite{DI09} and a few particular
solutions (in the three-particle case) have been found~\cite{BI08}. Notwithstanding this, it is
natural to consider the spin chain obtained by applying Polychronakos's freezing trick to the
elliptic CS model. Remarkably, as we shall prove in this paper, this is essentially the chain
introduced by Inozemtsev in 1990 as a model encompassing both the Haldane--Shastry and Heisenberg
chains~\cite{In90}. It should be pointed out that in Inozemtsev's original construction, the
functional form of the spin-spin interaction is obtained not through the freezing trick (which was
actually formulated a few years later), but by requiring that the Hamiltonian possess a quantum
Lax pair analogous to that of the \emph{classical} elliptic CS model.

The interest in Inozemtsev's elliptic chain has been recently rekindled due to its relevance in
connection with the AdS-CFT correspondence. Indeed, this long-range spin chain has been advanced
as a candidate for describing planar $\cN=4$ gauge theory
non-perturbatively~\cite{SS04,BBL09,Re12,Se13}. Since the latter theory is believed to be
integrable in view of the AdS-CFT correspondence, an essential requirement for the validity of
this description is the integrability of the Inozemtsev chain. In fact, already in Inozemtsev's
original paper two first integrals in involution were explicitly exhibited. In a subsequent
work~\cite{In96b}, the same author constructed a large family of first integrals including the two
previously known ones. Although it is conjectured that this family contains a maximal set of first
integrals in involution~\cite{In03}, to the best of our knowledge this crucial fact has never been
rigorously established. Thus, technically speaking the integrability of Inozemtsev's chain still
remains an open question\footnote{It is true that, as remarked above, Inozemtsev's chain admits a
  quantum Lax pair by construction. It should be noted, however, that a \emph{quantum} Lax pair by
  no means guarantees the existence of first integrals; see, e.g.,~\cite{In90}. Indeed, in the
  quantum case the entries of the Lax matrices are \emph{operators}, which in general do not
  commute with each other.}, which has defied the efforts of many specialists in the field for
over two decades.

The main aim of this paper is the analysis of the integrability of Inozemtsev's chain from a
different perspective, based on the statistical properties of its spectrum in the framework of
several important conjectures in quantum chaos. The first of these conjectures, formulated by
Berry and Tabor in the late seventies~\cite{BT77}, posits that the distribution of the spacings
between consecutive (suitably normalized) levels of a ``generic'' integrable quantum system%
\footnote{Well-known exceptions to the Berry--Tabor conjecture are, for instance, the harmonic
  oscillator and the square billiard.} should
be Poissonian. By contrast, in the case of a fully chaotic quantum system the spacings
distribution should follow a Wigner-type law (cf., for instance, Eq.~\eqref{wigner} below). In
fact, the latter conjecture, due to Bohigas, Giannoni and Schmit~\cite{BGS84}, is one of the
cornerstones of the theory of quantum chaos. In view of these conjectures, the behavior of the
spacings distribution of a quantum system can be used as a basic integrability test, relying
exclusively on statistical properties of its spectrum. It is well known, for instance, that the
Heisenberg (spin $1/2$) chain passes this integrability test~\cite{PZBMM93}. In this paper we
will show that this is also the case for Inozemtsev's chain, which provides further evidence of
its integrability.

A related statistical test of integrability, based on treating the energy levels as a time
sequence, has been recently proposed in Ref.~\cite{RGMRF02}. According to this test, the power
spectrum $\cP(\nu)$ of the spectral fluctuations of an integrable quantum system should behave as
$\nu^{-2}$ for sufficiently small values of the frequency $\nu$, while in a chaotic system
$\cP(\nu)\sim\nu^{-1}$. In fact, in Ref.~\cite{FGMMRR04} it is shown that this test is essentially
equivalent to the Berry--Tabor and Bohigas--Giannoni--Schmit conjectures for integrable systems
with Poissonian energy spacings and Gaussian random matrix ensembles. We shall see that the power
spectrum of Inozemtsev's chain behaves as $\nu^{-2}$, hinting again at its integrability. For
comparison purposes, we shall also study the power spectrum of Heisenberg's chain, showing that it
also behaves as the inverse square of the frequency, as expected for an integrable model.

The invariance of the Haldane--Shastry and Polychronakos--Frahm chains under the full Yangian
algebra implies that their spectrum is highly degenerate. Another important consequence of the
Yangian symmetry is the equivalence of the latter chains to a simple classical vertex model
(cf.~Eq.~\eqref{vM} below), whose interactions are closely related to Haldane's
\emph{motifs}~\cite{HHTBP92}. On the other hand, it is known that the Heisenberg chain is only
invariant under the Yangian in the limit when the number of spins tends to infinity~\cite{Be93},
which also explains why its spectrum is much less degenerate. It is widely believed that, like the
Heisenberg chain, Inozemtsev's chain does not possess Yangian symmetry for a finite number of
spins. In this work we shall devise a simple test to rule out the equivalence of a quantum system
to a classical vertex model of the form~\eqref{vM}. Applying this test we shall rigorously show
that the spin $1/2$ Inozemtsev chain is not equivalent to a vertex model of the form~\eqref{vM}
for $10\le N\le 18$ and a wide range of values of the elliptic modulus. This strongly suggests
that this is true in general, which is a further indication of the lack of Yangian symmetry of the
model. Moreover, we shall also see that for $N\gtrsim 10$ the average degeneracy of Inozemtsev's
chain is essentially equal to that of the Heisenberg chain. This is again consistent with the
general consensus that Inozemtsev's chain is an integrable system, more akin to the Heisenberg
than to the original Haldane--Shastry chain (which is obtained by setting the elliptic modulus to
zero).

In spite of the latter conclusion, we shall see that Inozemtsev's elliptic chain shares several
important properties with the HS chain. More precisely, it has been rigorously
established~\cite{EFG10} that the level density of all spin chains of HS type becomes normally
distributed in the large $N$ limit. The parameters of this normal distribution coincide with the
mean and the standard deviation of the spectrum, which have been computed in closed
form~\cite{FG05} and are respectively $O(N^3)$ and $O(N^{5/2})$. By numerically diagonalizing the
Hamiltonian, we shall provide strong evidence that the level density of Inozemtsev's chain is also
asymptotically Gaussian for sufficiently large $N$. To further support this conclusion, we shall
rigorously show that the skewness and the excess kurtosis of the spectrum tend to zero as
$N\to\infty$. We shall also compute in closed form both the mean and the standard deviation of the
spectrum as a function of $N$, $m$ and the elliptic modulus, and show that they have the same
asymptotic behavior when $N\to\infty$ as for the HS chain. To this end, we shall evaluate several
sums involving positive powers of the Weierstrass elliptic function, which are of interest
\emph{per se} as a natural generalization of the corresponding classical formulas for the cosecant
or cotangent~\cite{BY02}.

The paper is organized as follows. In Section 2 we construct the spin chain associated with the
elliptic spin Calogero--Sutherland model by means of Polychronakos's freezing trick, and
explicitly show that it essentially coincides with Inozemtsev's elliptic chain. Section 3 is
devoted to the study of the level density of Inozemtsev's chain, and to the evaluation of the mean
and variance of its spectrum in closed form. We present our statistical analysis of the chain's
integrability in Section 4, where we study the distribution of the spacings between consecutive
levels, the power spectrum of the spectral fluctuations, and the equivalence to a classical vertex
model. In Section 5 we briefly summarize our main results, explaining their relevance and pointing
out some of their applications and connections to other fields. The paper ends with four
appendices covering some background material on elliptic functions, as well as the proofs of some
technical results used in the body of the paper. In particular, in Appendix C we show how to
evaluate in closed form certain finite sums of powers of the Weierstrass elliptic function, and
determine their leading asymptotic behavior.

\section{The spin chain}
Apart from an irrelevant overall factor, the Hamiltonian of the scalar Calogero--Sutherland model
(of $A$ type) with elliptic interactions is commonly taken as
\begin{equation}\label{Hscwp}
\Hsc = -\Delta+a(a-1)\sum_{1\le i\ne j\le N}\wp(x_i-x_j)\,.
\end{equation}
Note that the half-periods $\om_1,\om_3$ of the Weierstrass elliptic function\footnote{We refer
  the reader to~\ref{app.prel} for a brief summary of several basic properties of the Weierstrass
  and Jacobi elliptic functions needed in the sequel.} $\wp$ are assumed to satisfy the condition
$\om_1,\iu\om_3\in\RR$, so that $\wp(x_i-x_j)$ is real and, consequently, $\Hsc$ is self-adjoint
in the Weyl alcove
\[
x_1<\cdots<x_N<x_1+2\om_1\,.
\]
The $\mathrm{su}(m)$ spin version of the previous model is given by
\begin{equation}\label{Hwp}
H = -\Delta+\sum_{1\le i\ne j\le N}\wp(x_i-x_j)\,a(a-S_{ij})\,,
\end{equation}
where $S_{ij}$ is the operator permuting the $i$-th and $j$-th spins. More precisely, the action
of $S_{ij}$ on an element
$\ket{\bs}\equiv\ket{s_1,\dots,s_N}\equiv\ket{s_1}\otimes\cdots\otimes\ket{s_N}$, $1\le s_i\le m$,
of the canonical basis of the internal Hilbert space $\Si\equiv(\CC^m)^{\otimes N}$ is given by
\[
S_{ij}\ket{s_1,\dots,s_i,\dots,s_j,\dots,s_N}=\ket{s_1,\dots,s_j,\dots,s_i,\dots,s_N}\,.
\]
Using Eqs.~\eqref{eis}, \eqref{wpsngen} and \eqref{kei}, and performing the change of variables
$x_i'=\sqrt{e_1-e_3}\,x_i$, we immediately obtain
\[
H=(e_1-e_3)\bigg(-\Delta'+\sum_{1\le i\ne j\le
  N}\frac{a(a-S_{ij})}{\sn^2(x_i'-x_j')}\bigg)+e_3\sum_{1\le i\ne j\le N}a(a-S_{ij})\,.
\]
The constant operator $\sum_{1\le i\ne j\ne N}S_{ij}$ commutes with $x_i$, $\pd_{x_i}$ and
$S_{ij}$ for all $i,j$, and hence with $H$. Therefore the last term in $H$ can be omitted without
changing neither the integrability nor the solvability of the model. We shall thus equivalently
define the Hamiltonian of the elliptic spin Calogero--Sutherland model as
\begin{equation}\label{Hsn}
  H=-\Delta+\sum_{1\le i\ne j\le
    N}\frac{a(a-S_{ij})}{\sn^2(x_i-x_j)}\,.
\end{equation}
This definition has several practical advantages over the usual one. In particular, the operator
$H$ in the latter equation depends on a single real parameter $k\in[0,1)$ (the modulus of the
elliptic function), the value $k=0$ yielding the well-known expression for the spin Sutherland
model. The scalar counterpart of Eq.~\eqref{Hsn} is
\begin{equation}
  \label{Hscsn}
  \Hsc = -\Delta+\sum_{1\le i\ne j\le
    N}\frac{a(a-1)}{\sn^2(x_i-x_j)}\,,
\end{equation}
which differs from the Hamiltonian~\eqref{Hscwp} by a trivial (constant) rescaling and the
addition of a constant energy. Note that the configuration space of both $H$ and $\Hsc$ can be
taken as the Weyl alcove
\begin{equation}
  \label{Walc}
  A=\big\{\,\bx\in\RR^N:x_1<\cdots<x_N<x_1+2K\big\}\,,
\end{equation}
where $2K$ is the real period of the elliptic function $\sn^2$ (cf.~Eq.~\eqref{K}).

From Eqs.~\eqref{Hsn}-\eqref{Hscsn} it immediately follows that
\[
H = \Hsc +4a \hat H(\bx)\,,
\]
with
\[
\hat H(\bx)=\frac12\sum_{1\le i<j\le N}\frac{1-S_{ij}}{\sn^2(x_i-x_j)}\,.
\]
In order to define the spin chain associated with the spin dynamical model~\eqref{Hsn}, we need to
study the equilibria in the Weyl alcove~\eqref{Walc} of the function
\begin{equation}
  \label{U}
  U(\bx)=\sum_{1\le i\ne j\le
    N}\frac1{\sn^2(x_i-x_j)}\,,
\end{equation}
which is proportional to the interaction potential of the scalar model~\eqref{Hscsn}. We shall
show in Subsection~\ref{sec.sites} that (up to a rigid translation) $U$ has a unique minimum
$\bxi\equiv(\xi_1,,\dots,\xi_N)$ in the set $A$, whose coordinates are given by
\begin{equation}\label{xij}
\xi_j=\frac{2jK}N\,,\qquad j=1,\dots,N\,.
\end{equation}
According to Polychronakos's freezing trick argument~\cite{Po93}, the Hamiltonian of the spin
chain associated with $H$ is (proportional to)
\begin{equation}\label{cH}
  \cH=\hat H(\bxi)
\equiv\frac12\sum_{1\le i<j\le N}\frac{1-S_{ij}}{\sn^2\bigl(2(i-j)\frac KN\bigr)}\,.
\end{equation}
Moreover, it can be shown~\cite{Po94} that the partition functions $\cZ$, $Z$ and $\Zsc$ of $\cH$,
$H$ and $\Hsc$ are related by
\[
\cZ(T)=\lim_{a\to\infty}\frac{Z(4aT)}{\Zsc(4aT)}\,.
\]
In order to establish the equivalence between Eq.~\eqref{cH} and the original definition of
Inozemtsev~\cite{In90}, it suffices to note that by Eqs.~\eqref{wpdil} and~\eqref{wpsn} we have
\begin{equation}\label{snwp}
\sn^{-2}\biggl(2(i-j)\frac KN\biggr)=\wp\bigl(2(i-j)\tfrac KN;K,\iu K'\bigr)+\frac13\,(1+k^2)
=\frac{N^2}{4K^2}\,\wp_N(i-j)+\frac13\,(1+k^2)\,,
\end{equation}
where $\wp_N$ is the Weierstrass function with periods $N$ and~$\,\iu NK'/K\equiv\iu N\tau$.

{}It is obvious on account of Eq.~\eqref{snsin} that when $k=0$ the elliptic chain~\eqref{cH}
reduces exactly to the Haldane--Shastry chain
\begin{equation}\label{HSchain}
\cH_{\mathrm{HS}}=\frac12\sum_{1\le i<j\le N}\frac{1-S_{ij}}{\sin^2\bigl((i-j)\frac\pi N\bigr)}\,.
\end{equation}
(This is, indeed, the main reason to modify
Inozemtsev's original definition.) It is also of interest to study the limit of the
Hamiltonian~\eqref{cH} when $k\to 1$. To this ends, we shall make use of the
identity
\begin{equation}
  \label{qseries}
  \frac{\om_1^2}{\pi^2}\,\wp(z;\om_1,\om_3)=-\frac1{12}+\frac14\,\sin^{-2}\biggl(\frac{\pi
    z}{2\om_1}\biggr)+4\sum_{j=1}^\infty\frac{j\,q^{2j}}{1-q^{2j}}\,\sin^2\biggl(\frac{j\,\pi
    z}{2\om_1}\biggr)
\end{equation}
(cf.~\cite{OLBC10}, Eqs.~23.8.1-5), where $\Im(\om_3/\om_1)>0$ and
\[
q=\e^{\iu\pi\om_3/\om_1}
\]
is the so-called Jacobi nome. From Eqs.~\eqref{snwp}-\eqref{qseries} with $\om_1=-\iu K'$,
$\om_3=K$ and $z=2lK/N\equiv 2l\om_3/N$, after a straightforward calculation we obtain
\begin{equation}\label{snqseries}
\sn^{-2}(2lK/N)=\frac13(1+k^2)+\frac{\pi^2}{K'^2}\ms\left(\frac1{12}+\frac{q^{2l/N}}{(1-q^{2l/N})^2}+
\sum_{j=1}^\infty\frac{j\,q^{2j(N-l)/N}(1-q^{2jl/N})^2}{1-q^{2j}}\right)\,.
\end{equation}
Letting $k\to1$ and noting that $K'(1)=\pi/2$ and $q\to0$ as $k\to1$ we finally have
\[
\lim_{k\to1}\sn^{-2}(2lK/N)=1\,,\qquad 1\le l\le N-1\,,
\]
and therefore
\begin{equation}\label{cH1}
\lim_{k\to1}\cH=\frac12\sum_{i<j}(1-S_{ij})\equiv\cH_1\,.
\end{equation}
As first pointed out by Inozemtsev, the Heisenberg chain is also related to the $k\to1$ limit of
the elliptic chain. Indeed, when $k\to1$ the nome $q$ and the imaginary half-period $\iu K'$
satisfy\footnote{We shall say that $f(x)=O(x^n)$ as $x\to0$ or $x\to\infty$ if $|x^{-n}f(x)|$ is
  bounded in this limit.}
\[
q=\frac{\vep}{16}+O(\vep^2)\,,\qquad K'=\frac{\pi}2\bigg(1+2\sum_{n=1}^\infty q^{n^2}\bigg)^2\,,
\]
where $\vep\equiv 1-k^2$ (see~Ref.~\cite{OLBC10}). Hence
\[
\frac{\pi^2}{4K'^{2}}=1-8q+O(q^2)=1-\frac{\vep}2+O(\vep^2)\,,
\]
and therefore
\begin{align*}
\sn^{-2}(2lK/N)&=\frac{2}{3}-\frac{\vep}3+4\bigg(1-\frac{\vep}2+O(\vep^2)\bigg)\,\left(
                 \frac1{12}+q^{2/N}\,\frac{q^{2(l-1)/N}}{(1-q^{2l/N})^2}\right.\\
  &\hphantom{=\frac{2}{3}-\frac{\vep}3+4\bigg(1-\frac{\vep}2+O(\vep^2)\bigg)\,
                 \left(\frac1{12}\right.}\left.{}+
q^{2/N}\sum_{j=1}^\infty\frac{j\,q^{2[j(N-l)-1]/N}(1-q^{2jl/N})^2}{1-q^{2j}}\right)\\
&=1+4\bigg(\frac{\vep}{16}\bigg)^{2/N}\big(\de_{l,1}+\de_{l,N-1}\big)+O(\vep^{4/N})\,,
\end{align*}
where we have dropped an $O(\vep)$ term negligible compared to $\vep^{4/N}$ for $N>4$. Thus
\[
\cH=\cH_1+\bigg(\frac{\vep}{16}\bigg)^{2/N}\,\cH_{\mathrm{He}}+O(\vep^{4/N})\,,
\]
where
\begin{equation}\label{cHhe}
\cH_{\mathrm{He}}=2\sum_{i}(1-S_{i,i+1})\,,\qquad S_{N,N+1}\equiv S_{1N}\,.
\end{equation}
As is well known~\cite{Po06}, the spin permutation operators
$S_{ij}$ can be expressed in terms of the $\mathrm{su}(m)$ Hermitian generators $J_k^\al$ at site
$k$ (with the normalization $\tr(J^\al_kJ^\be_k)=2\de^{\al\be}$) as
\begin{equation}\label{Sij}
S_{ij}=\frac1m+\frac12\sum_{\al=1}^{m^2-1}J^\al_iJ^\al_j\,.
\end{equation}
Thus, for $m=2$ the operator $\cH_{\mathrm{He}}$ is essentially the Hamiltonian of the spin $1/2$
(closed) Heisenberg chain.

\subsection{The chain sites}\label{sec.sites}
We shall now prove that the chain sites of the spin chain associated with the elliptic spin
Calogero--Sutherland model~\eqref{Hsn} are indeed the points~\eqref{xij}, so that Eq.~\eqref{cH}
is essentially equivalent to Inozemtsev's original definition. As we have seen above, it suffices
to show that the scalar potential $U$ has a unique minimum (up to a rigid translation) in the Weyl
alcove~\eqref{Walc}, with coordinates given by Eq.~\eqref{xij}. To this end, we shall first prove
the following more general result:
\begin{thm}\label{thm.critpoint}
  Let $f$ be an odd differentiable function with no zeros in the interval $(0,2K)$, satisfying
  \begin{equation}\label{fcondK}
  f(x+2K)=-f(x)\,,\qquad\all x\,.
\end{equation}
Then the point $\bxi=(\xi_1,\dots,\xi_N)$ with coordinates~\eqref{xij}
is a critical point of the scalar potential
\begin{equation}\label{Uf}
U(\bx)=\sum_{1\le i\ne j\le N} f^{-2}(x_i-x_j)\,.
\end{equation}
\end{thm}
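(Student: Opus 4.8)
The plan is to show that the gradient of $U$ vanishes at $\bxi$ by exploiting the symmetry of the configuration~\eqref{xij} under the discrete group of translations that shift each coordinate by $2K/N$ (combined with a cyclic relabeling). First I would compute a single partial derivative. Since $U(\bx)=\sum_{i\ne j}f^{-2}(x_i-x_j)$ and $f$ is differentiable with no zeros in $(0,2K)$, we have
\[
\pd_{x_k}U(\bx)=-2\sum_{j\ne k}\frac{f'(x_k-x_j)}{f^3(x_k-x_j)}\,,
\]
where I have used the oddness of $f$ to combine the two terms in which the index $k$ appears (once as $i$ and once as $j$) and the fact that $f'$ is even. Writing $g=f'/f^3$, which is an even function, the condition for $\bxi$ to be critical becomes $\sum_{j\ne k}g(\xi_k-\xi_j)=0$ for every $k$.

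Next I would substitute the explicit coordinates $\xi_k=2kK/N$, so that $\xi_k-\xi_j=2(k-j)K/N$ depends only on the difference $l=k-j$, which ranges over the nonzero residues modulo $N$ in the relevant window. The key structural input is the quasi-periodicity~\eqref{fcondK}: from $f(x+2K)=-f(x)$ one gets $f^{-2}(x+2K)=f^{-2}(x)$, hence $g=f'/f^3$ is genuinely $2K$-periodic. Therefore the quantity $g\bigl(2lK/N\bigr)$ is a function of $l$ modulo $N$ alone, and the criticality sum at site $k$ reduces to $\sum_{l}g(2lK/N)$ summed over the $N-1$ nonzero residues $l\bmod N$ — a sum that no longer depends on $k$. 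It then suffices to show this single sum vanishes.

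To finish, I would pair the term indexed by $l$ with the term indexed by $-l$ (equivalently $N-l$). Because $g$ is even, $g(2lK/N)=g(-2lK/N)$, but the $2K$-periodicity also lets me identify $g(-2lK/N)=g(2(N-l)K/N)$. The genuine cancellation comes from a sign: oddness of $f$ gives $f(-x)=-f(x)$, so $f^3$ is odd while $f'$ is even, whence $g(-x)=-g(x)$; combined with the evenness argument this forces $g$ to vanish at the relevant points or to cancel in antipodal pairs. Concretely, $g$ is the derivative (up to a constant) of the even, $2K$-periodic function $f^{-2}$, so $g(x)=-\tfrac12\,(f^{-2})'(x)$ is an \emph{odd} $2K$-periodic function; pairing $l\leftrightarrow N-l$ then cancels the sum termwise, and the only possibly unpaired term, $l=N/2$ when $N$ is even, satisfies $g(K)=0$ because an odd $2K$-periodic function vanishes at the half-period $K$.

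The main obstacle I anticipate is bookkeeping rather than deep: one must be careful about the exact range of the summation index once the coordinate differences are reduced modulo $N$ (some differences $\xi_k-\xi_j$ lie outside $(0,2K)$ and must be shifted by $2K$ before applying periodicity, which is precisely where~\eqref{fcondK} is used), and one must correctly handle the self-pairing term at $l=N/2$ in the even case. Once the reduction to the single site-independent sum $\sum_l g(2lK/N)=0$ is in place, the cancellation is immediate from the oddness and $2K$-periodicity of $g=-\tfrac12(f^{-2})'$.
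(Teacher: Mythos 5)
Your proof is correct and follows essentially the same route as the paper: reduce the gradient condition to the single site-independent sum $\sum_{l=1}^{N-1}g\bigl(2lK/N\bigr)$ via the $2K$-periodicity of $g=(f^{-2})'$, then cancel the terms in pairs $l\leftrightarrow N-l$ using the oddness of $g$, with the unpaired $l=N/2$ term (for even $N$) handled by $g(K)=0$. The only blemish is that you twice describe $g$ as \emph{even} before correctly concluding it is odd; the mechanism summarized in your final paragraph --- oddness plus $2K$-periodicity of $g$ --- is the right one and is exactly the paper's argument.
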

\begin{proof}
  Note, first of all, that from Eq.~\eqref{fcondK} it follows that $f^2$ is $2K$-periodic.
  Moreover, differentiating the latter equation we obtain
  \[
  f'(x+2K)=-f'(x)\,,\qquad\all x\,,
  \]
  and since $f'$ is even we have
  \[
  f'(K)=-f'(-K)=-f'(K)\quad\implies\quad f'(K)=0\,.
  \]
  Let $g=-2f'/f^3$ denote the derivative of $f^{-2}$, which is defined everywhere except at even
  multiples of $K$. By the previous remarks, $g$ is an odd, $2K$-periodic function satisfying
  \begin{equation}\label{gK}
    g(K)=0\,.
  \end{equation}
 We must prove that
  \begin{equation}\label{xiconds}
    \frac12\,\frac{\pd U}{\pd x_i}(\bxi)=\sum_{j;j\ne i}g(\xi_i-\xi_j)
    = \sum_{j;j\ne i}g\bigl(2(i-j)\tfrac KN\bigr)=0\,,\qquad i=1,\dots,N\,.
  \end{equation}
  Calling $l=i-j$ we have
  \begin{align*}
    \sum_{j;j\ne i}g\bigl(2(i-j)\tfrac KN\bigr)&=\sum_{l=1}^{i-1}g\bigl(\tfrac{2lK}N\bigr)+
    \sum_{l=i-N}^{-1}g\bigl(\tfrac{2lK}N\bigr)=\sum_{l=1}^{i-1}g\bigl(\tfrac{2lK}N\bigr)+
    \sum_{l=i-N}^{-1}g\bigl(\tfrac{2(l+N)K}N\bigr)\\&= \sum_{l=1}^{i-1}g\bigl(\tfrac{2lK}N\bigr)+
    \sum_{l=i}^{N-1}g\bigl(\tfrac{2lK}N\bigr)
    =\sum_{l=1}^{N-1}g\bigl(\tfrac{2lK}N\bigr)\,,
  \end{align*}
  where in the second equality we have used the $2K$-periodicity of $g$. Thus the $N$
  equations~\eqref{xiconds} are equivalent to the single condition
  \begin{equation}\label{gsum}
    \sum_{l=1}^{N-1}g\bigl(\tfrac{2lK}N\bigr)=0\,.
  \end{equation}
  In order to establish the previous identity, note that for arbitrary $g$ we have
  \begin{equation}\label{gsum2}
    \sum_{l=1}^{N-1}g\bigl(\tfrac{2lK}N\bigr)=
    \sum_{l=1}^{[(N-1)/2]}\left[g\bigl(\tfrac{2lK}N\bigr)+g\bigl(\tfrac{2(N-l)K}N\bigr)\right]
    +\big(1-\pi(N)\bigr)g(K)\,,
  \end{equation}
  where $\pi(N)$ is the parity of $N$. However, in this case the last term can be omitted on
  account of Eq.~\eqref{gK}, while
  \[
  g\bigl(\tfrac{2(N-l)K}N\bigr)=g\bigl(2K-\tfrac{2lK}N\bigr)=g\bigl(-\tfrac{2lK}N\bigr)
  =-g\bigl(\tfrac{2lK}N\bigr)\,,
  \]
  so that the sum in the right-hand side of Eq.~\eqref{gsum2} also vanishes. This proves
  Eq.~\eqref{gsum} and hence establishes our claim.
\end{proof}

The previous theorem with $f(x)=\sn x$ implies that the point $\bxi$ with coordinates~\eqref{xij}
is a critical point of the potential~$U$, which obviously lies in the open region~\eqref{Walc}. It
is not clear, however, whether this is the only critical point of $U$ in the latter region, and
whether it is indeed a minimum. It turns out that both of these facts can be established in a
rather straightforward way for a large class of potentials of the form~\eqref{Uf}. We must first
deal with a technical complication stemming from the fact that $U$ is invariant under a rigid
translation of the particles, i.e,
\[
U(\bx+\la\bv)=U(\bx)\,,\qquad \all\la\in\RR\,,\quad \bv=(1,\dots,1)\,.
\]
This problem is addressed in~\ref{app.lemmas}, where several standard criteria are generalized to
this type of functions. With the help of these criteria, it is straightforward to obtain the
following general result:
\begin{thm}
  Let $U(\bx)$ be defined by equation~\eqref{Uf}, where $f$ is as in
  Theorem~\ref{thm.critpoint}, and assume furthermore that $f$ is of class $C^2$ and  \begin{equation}\label{fcond}
    3f'^2(x)-f(x)f''(x)>0\,,\qquad \all x\in(0,2K)\,.
  \end{equation}
  Then the point $\bxi$ with coordinates~\eqref{xij} is the unique critical point of $U$ in the
  set $A$ modulo rigid translations of the particles, and is in fact the global minimum of $U$ in
  the latter set.
\end{thm}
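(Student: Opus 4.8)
The plan is to recognize condition~\eqref{fcond} as a convexity statement and then exploit the convexity of $U$ on the Weyl alcove. Writing $h=f^{-2}$, a direct computation gives $h''=2f^{-4}\bigl(3f'^2-ff''\bigr)$, so the hypothesis~\eqref{fcond} is exactly the requirement that $h''>0$ on $(0,2K)$; that is, $f^{-2}$ is strictly convex there. Since $f$ is odd, $h$ is even, and since $f^2$ is $2K$-periodic (as noted in the proof of Theorem~\ref{thm.critpoint}), so is $h$; hence $h''$ is even and $2K$-periodic as well. I would also record that the set $A$ in~\eqref{Walc} is convex (it is cut out by the linear inequalities $x_{i+1}-x_i>0$ and $x_1+2K-x_N>0$), and that on $A$ every difference $x_i-x_j$ with $i>j$ lies in $(0,2K)$, so that $h''(x_i-x_j)>0$ for every pair $i\ne j$ throughout $A$.

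The key step is to compute the Hessian of $U$ and read off its signature. Starting from $\tfrac12\,\pd U/\pd x_i=\sum_{j\ne i}h'(x_i-x_j)$ (as in Theorem~\ref{thm.critpoint}) and differentiating once more, the associated quadratic form evaluates, after symmetrizing with the aid of the evenness of $h''$, to
\[
\bw^{\mathsf{T}}\,\mathrm{Hess}\,U(\bx)\,\bw=2\sum_{1\le i<j\le N}h''(x_i-x_j)\,(w_i-w_j)^2 .
\]
On $A$ every coefficient $h''(x_i-x_j)$ is strictly positive, so this form is nonnegative and vanishes precisely when $w_1=\dots=w_N$, i.e.\ when $\bw$ is proportional to $\bv=(1,\dots,1)$. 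Thus $\mathrm{Hess}\,U$ is positive semidefinite on all of $A$, with kernel exactly the translation direction $\RR\bv$. Equivalently, $\mathrm{Hess}\,U$ is (twice) the Laplacian of the complete graph on $N$ vertices with the positive edge weights $h''(x_i-x_j)$, whose connectedness forces this one-dimensional kernel. Consequently $U$ is convex on the convex set $A$ and strictly convex on every affine slice transverse to $\bv$.

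With convexity in hand I would invoke the translation-invariant versions of the standard convex-minimization criteria established in~\ref{app.lemmas}. Because $U$ is convex on the convex domain $A$, any critical point is automatically a global minimum (by the first-order characterization $U(\bx)\ge U(\by)+\nabla U(\by)\cdot(\bx-\by)$), so $U$ has no saddle points and its critical set coincides with its set of global minimizers; because $U$ is strictly convex modulo $\RR\bv$, this minimizing set cannot contain two points differing by a direction other than $\bv$, so it is a single translation orbit. Theorem~\ref{thm.critpoint} already exhibits $\bxi$ as a critical point of $U$ in $A$, whence $\bxi$ must be \emph{the} critical point modulo rigid translation and the global minimum of $U$ on $A$, as claimed.

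The main obstacle is precisely the translation invariance: the Hessian is never definite — it always carries the kernel $\RR\bv$ — so the textbook ``strict convexity $\Rightarrow$ unique minimum'' statement cannot be applied verbatim, and one must pass to a transverse section (or to the quotient $A/\RR\bv$) to recover uniqueness. Making this reduction rigorous, and verifying that convexity and the existence of the critical point survive the quotient, is exactly what the auxiliary lemmas of~\ref{app.lemmas} are meant to supply; by contrast, the remaining ingredients — the convexity of $h$ and the graph-Laplacian structure of the Hessian — are routine once the quadratic form displayed above is written down.
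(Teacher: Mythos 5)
Your proposal is correct and follows essentially the same route as the paper: compute the Hessian quadratic form, symmetrize it using the evenness of $(f^{-2})''$ into $\sum_{i\ne j}(f^{-2})''(x_i-x_j)(w_i-w_j)^2$, observe that condition~\eqref{fcond} makes every coefficient positive on $A$ so that the kernel is exactly $\RR\bv$, and then conclude via the translation-invariant criteria of~\ref{app.lemmas} together with the critical point supplied by Theorem~\ref{thm.critpoint}. The graph-Laplacian framing is a pleasant gloss, but the substance of the argument is identical.
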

\begin{proof}
  Note, first of all, that the set $A$ in Eq.~\eqref{Walc} is obviously convex and invariant under
  an overall translation of the particles' coordinates. Moreover, since $0<|x_i-x_j|<2K$ for all
  $\bx\in A$, the function $U$ is of class $C^2$ in $A$. In order to apply Lemma~\ref{lem.minU},
  we must evaluate the Hessian of $U$ at an arbitrary point of $A$. Calling again $g=(f^{-2})'$ we
  easily obtain
  \[
  \pdf{^2U}{x_i^2}=2\sum_{j;j\ne i}g'(x_i-x_j)\,,\qquad \pdf{^2U}{x_i\partial
    x_j}=-2g'(x_i-x_j)\quad (i\ne j)\,,
  \]
  where we have taken into account that $f^2$ is an even function. If $\bh\in\RR^N$ is an arbitrary
  vector we therefore have
  \[
  \Big(\bh,D^2U(\bx)\cdot\bh\Big)=2\sum_{i\ne j}g'(x_i-x_j)(h_i^2-h_ih_j)\,.
  \]
  However, the function $g'\equiv (f^{-2})''$ is even (cf.~Theorem~\ref{thm.critpoint}), so
  that
  \[
  \sum_{i\ne j}g'(x_i-x_j)h_i^2=\frac12\sum_{i\ne j}g'(x_i-x_j)(h_i^2+h_j^2)\,,
  \]
  and therefore
  \[
  \Big(\bh,D^2U(\bx)\cdot\bh\Big)=\sum_{i\ne j}g'(x_i-x_j)(h_i-h_j)^2\,.
  \]
  Since
  \[
  g'=(f^{-2})''=\frac2{f^4}\,(3f'^2-ff'')
  \]
  is $2K$-periodic, by Eq.~\eqref{fcond} it is positive everywhere except at even multiples of
  $K$. Therefore the Hessian $D^2U(\bx)$ is positive semidefinite, and moreover
  \[
  \Big(\bh,D^2U(\bx)\cdot\bh\Big)=0\en\Iff\en h_i=h_j\,,\en\all i\ne j\en\Iff\en\bh\in\RR\bv\,,
  \]
  with $\bv=(1,\dots,1)$. The statement of the theorem then follows directly from
  Lemma~\ref{lem.minU}.
\end{proof}
In the particular case of the potential $U$ in Eq.~\eqref{U}, for which $f(x)=\sn x$ by
Eq.~\eqref{Uf}, we have
  \[
  3f'^2(x)-f(x) f''(x)=k^2\cn^4x+2\cn^2x+k'^2>0\,,\qquad\all x\in\RR\,.
  \]
  By the previous theorem, $U$ has a unique critical point $\bxi\in A$ (modulo a rigid
  translation), whose coordinates are given by Eq.~\eqref{xij}. Moreover, this point is the
  global minimum of $U$ in the open set $A$, as claimed.
  
\section{Level density}\label{sec.LD}
A well-known property shared by all spin chains of HS type is the fact that their level density
becomes asymptotically Gaussian when the number of spins is large enough. This property has been
rigorously established for all HS chains of type $A_{N-1}$~\cite{EFG10}, and there is strong
numerical evidence that it also holds for other root systems and in the supersymmetric
case~\cite{EFGR05,BB06,BFGR08,BFGR09,BFG09,BB09,BFG13}. Since the Haldane--Shastry chain is the
$k\to0$ limit of Inozemtsev's chain, it is natural to investigate whether the level density of the
latter chain is also approximately Gaussian when $N\gg1$. A fundamental difficulty that must be
faced in this case is the lack of a closed-form expression for the spectrum, so that one is forced
to diagonalize numerically the Hamiltonian $\cH$ in Eq.~\eqref{cH}. If the symmetries of~$\cH$ are
not taken into account, this approach becomes unfeasible on a standard desktop computer for
$N\gtrsim 15$ spins in the most favorable case $m=2$. Fortunately, the fact that $\cH$ depends
only on spin permutation operators implies that it leaves invariant all the subspaces with
well-defined ``spin content''. More precisely, we shall say that two basic states $\ket\bs$,
$\ket{\bs'}$ have the same spin content if their spin components are related by a suitable
permutation. {}From this definition, it immediately follows that each linear subspace of $\Si$
spanned by {\em all} basic states with the same spin content is invariant under the permutation
operators $S_{ij}$, and hence under~$\cH$.

Diagonalizing the restriction of $\cH$ to each of these invariant subspaces we have been able to
compute the whole spectrum for up to $N=18$ spins ($m=2$) and $N=12$ ($m=3$), for several values
of the modulus $k$. Our numerical results clearly indicate that for sufficiently large $N$ the
level density of Inozemtsev's elliptic chain follows with great accuracy a Gaussian distribution
with parameters $\mu$ and $\si$ equal to the mean and standard deviation of the spectrum. For
instance, in Fig.~\ref{fig.qqplots} we have plotted the quantiles of the Gaussian distribution
with parameters $\mu$ and $\si$ computed from the spectrum versus the corresponding quantiles of
the level density of Inozemtsev's chain in the cases $N=18$, $m=2$ and $N=12$, $m=3$. It is
apparent that in both cases the curve thus obtained practically coincides with the straight line
$y=x$ for a wide range of energies. In fact, these plots are very similar to the corresponding one
for the Haldane--Shastry chain with $N=18$, $m=2$, whose level density has been rigorously shown
to be asymptotically Gaussian when the number of sites tends to infinity~\cite{EFG10}. In
contrast, the analogous $q$-$q$ plot for the Heisenberg chain with $N=18$ spins, also shown in
Fig.~\ref{fig.qqplots}, deviates more markedly from the line $y=x$. The above remarks are
quantitatively corroborated by the numerical data summarized in Table~\ref{table.RMSE}, where we
have listed the RMS error of the fit of the spectrum's cumulative level density to the cumulative
Gaussian distribution
\begin{equation}\label{CGDF}
G(E)=\frac12\,\bigg[1+\erf\bigg(\frac{E-\mu}{\sqrt 2 \si}\bigg)\bigg]\,,
\end{equation}
where $\mu$ and $\si$ denote respectively the mean and the standard deviation of the spectrum.
\begin{figure}[h]
  \centering
  \includegraphics[width=6.5cm]{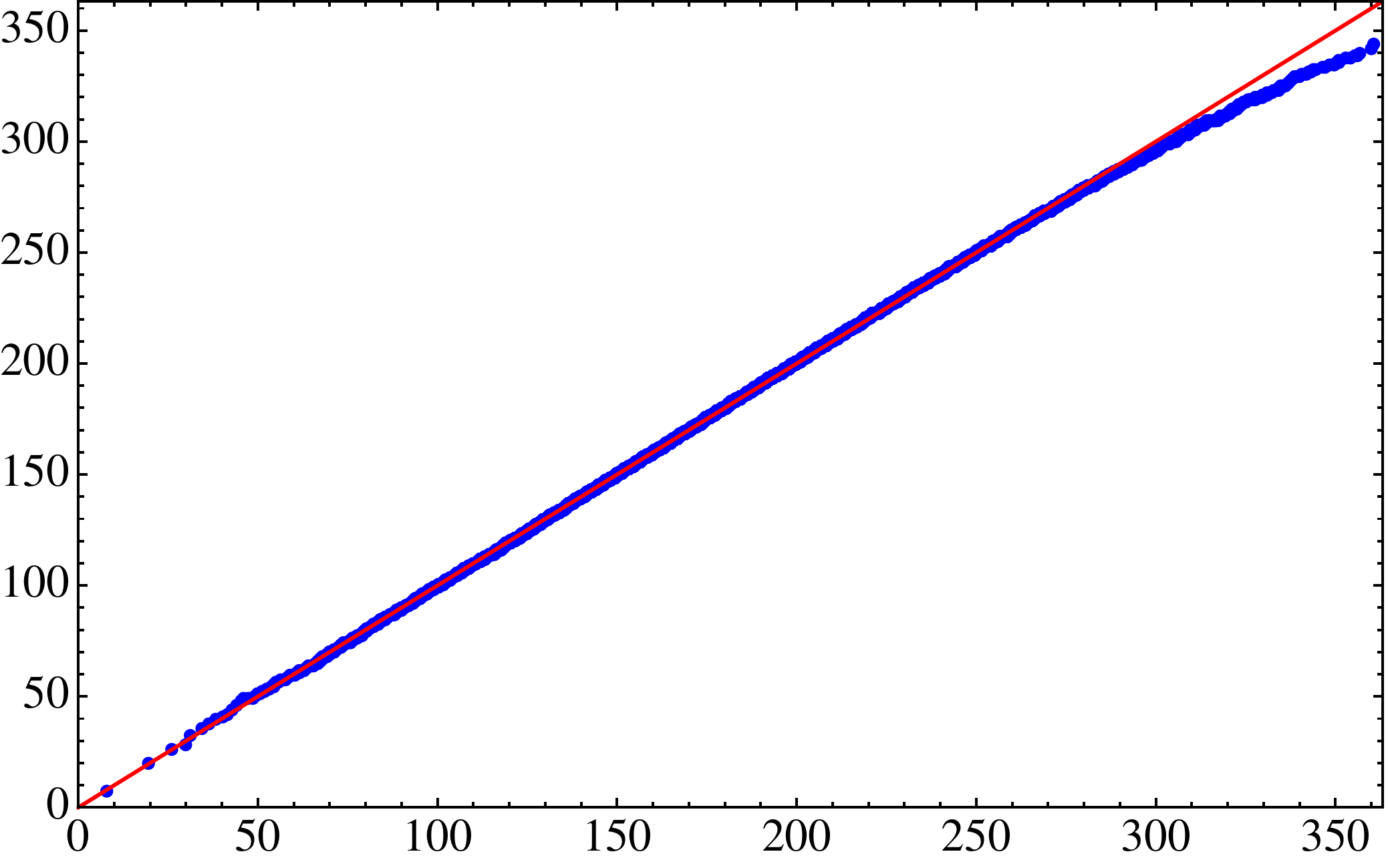}\hfill
  \includegraphics[width=6.5cm]{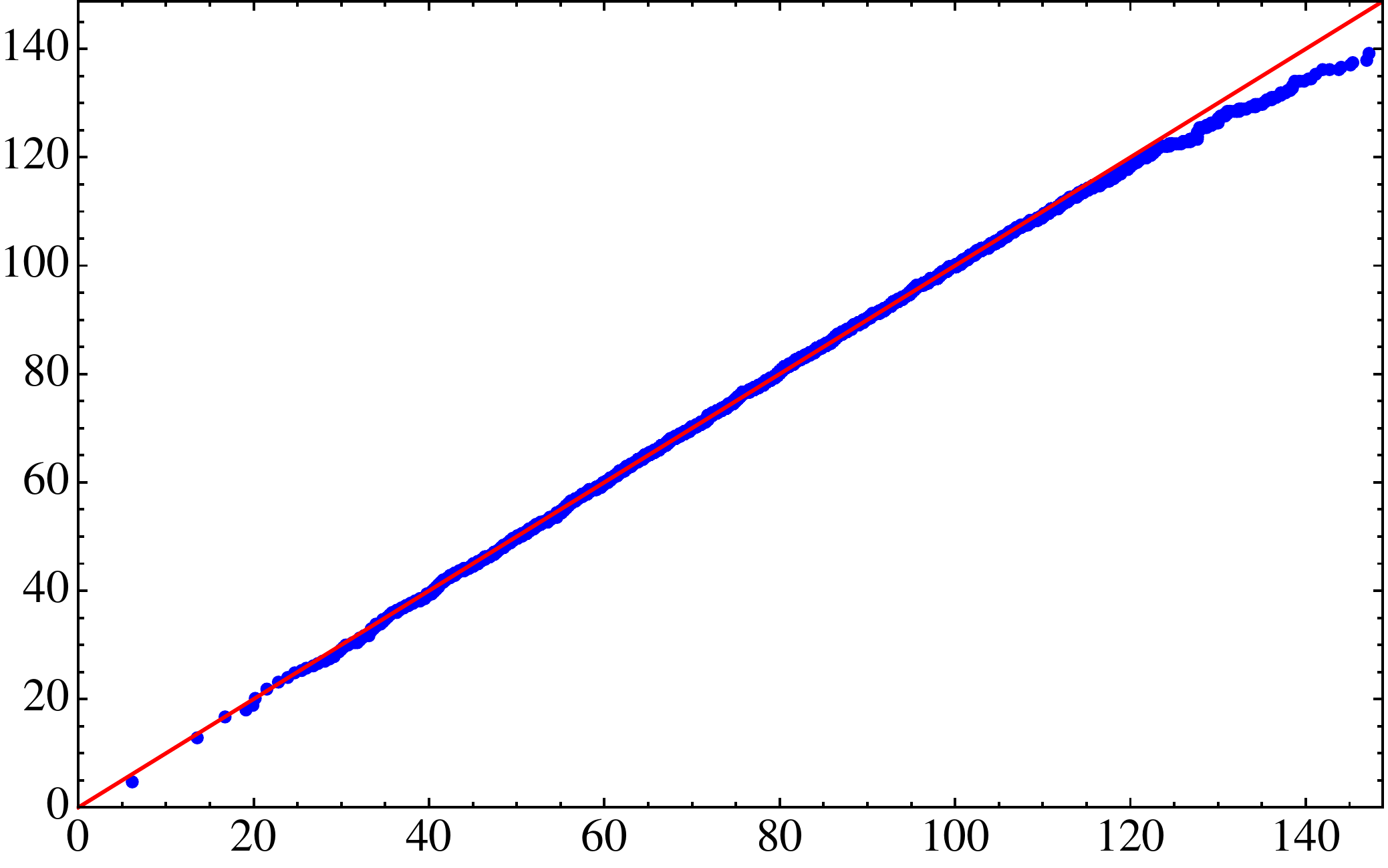}\\[3mm]
  \includegraphics[width=6.5cm]{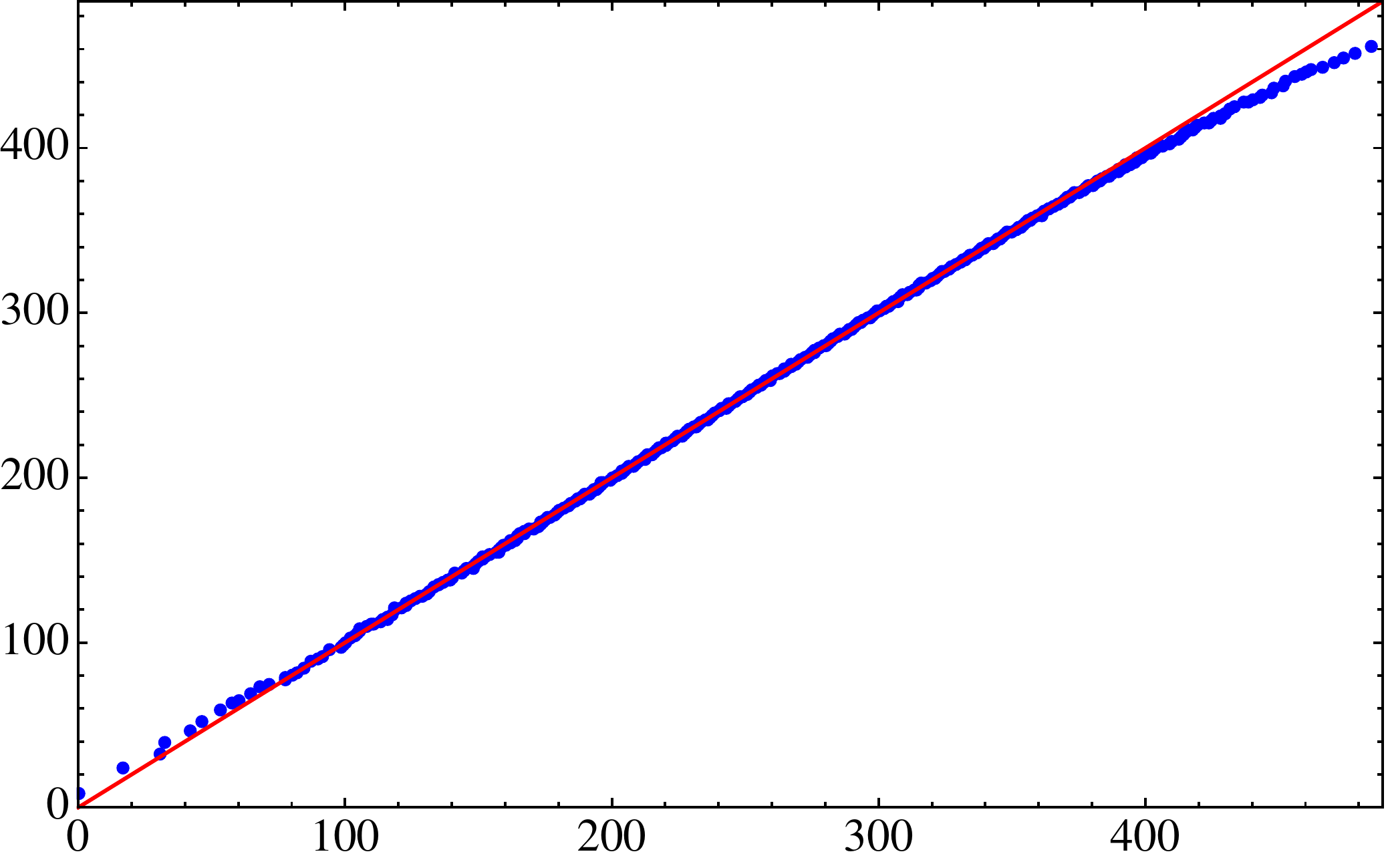}\hfill
   \includegraphics[width=6.5cm]{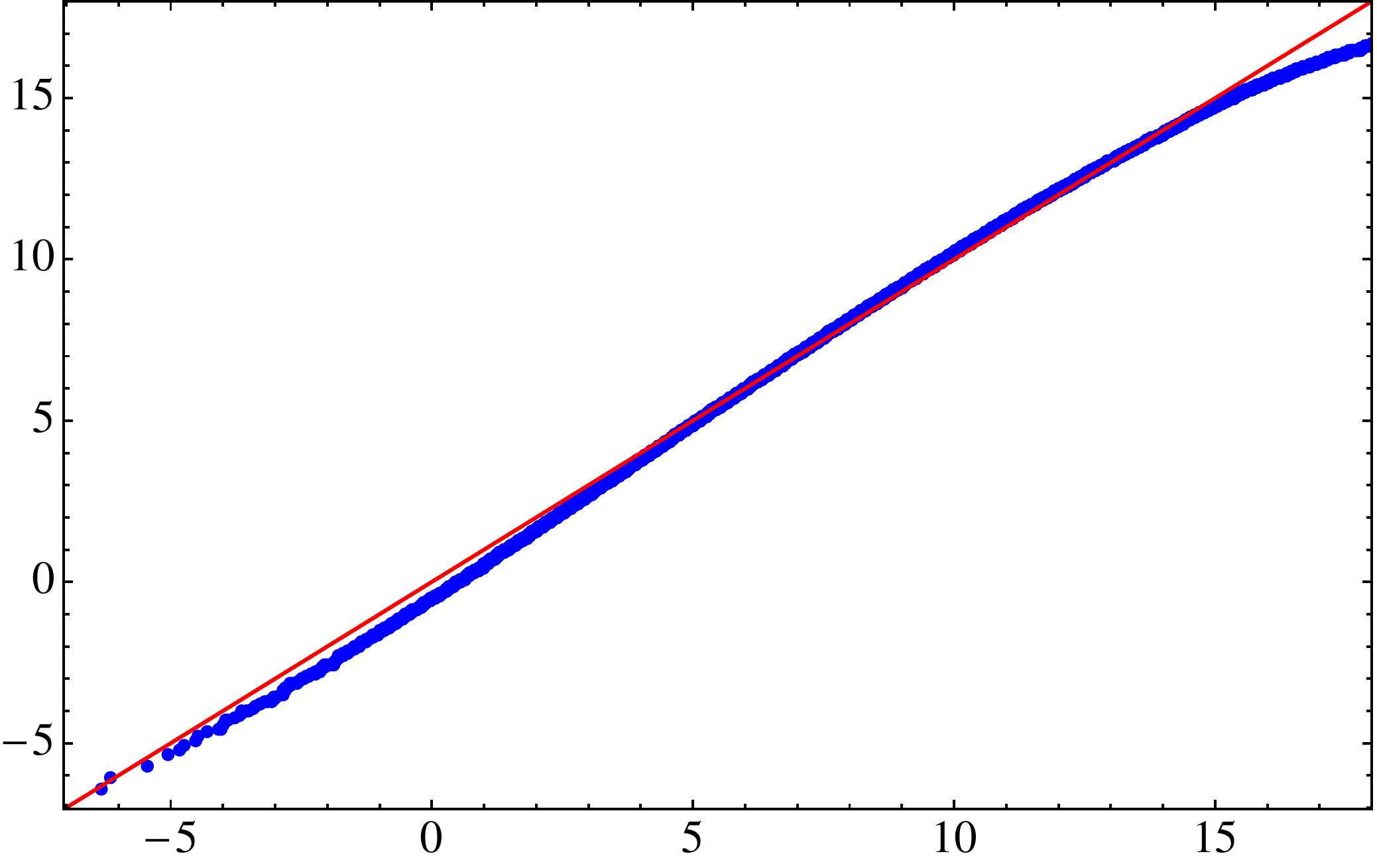}\hfill
   \caption{$q$-$q$ plots of the Gaussian distribution (with parameters $\mu$ and $\si$ computed
     from the spectrum) vs.\ the level density of Inozemtsev's chain for $N=18$, $m=2$ (top left)
     and $N=12$, $m=3$ (top right), where in both cases $k^2=1/2$. For comparison purposes, we
     have also shown the corresponding $q$-$q$ plots for the Haldane--Shastry chain with $N=18$,
     $m=2$ (bottom left) and the Heisenberg chain with $N=18$ spins (bottom right). In all four
     graphs, the straight line $y=x$ has been represented in red.}
  \label{fig.qqplots}
\end{figure}
\begin{table}[h]
  \centering
  \begin{tabular}{|c|c|c|c|c|}
    \hline
    Spin chain & $N$ & $m$ & $k^2$ & RMS error\vrule height10pt width0pt\\
    \hline\hline
               & & & 0.2 & $2.707\times 10^{-3}$\vrule height10pt width0pt\\
    \cline{4-5}
               & $18$ & $2$ & 0.5 & $2.776 \times 10^{-3}$\vrule height10pt width0pt\\
    \cline{4-5}
    \smash{\raisebox{1.7ex}{Inozemtsev}} & & & 0.8 & $3.594\times 10^{-3}$\vrule height10pt width0pt\\
    \cline{2-5}
               & $12$ & $3$ & 0.5 & $4.005 \times 10^{-3}$\vrule height10pt width0pt\\
    \hline
               & $18$ & & & $2.650\times 10^{-3}$\vrule height10pt width0pt\\
    \cline{2-2}\cline{5-5}
    \smash{\raisebox{1.4ex}{HS}} & $50$ & \smash{\raisebox{1.4ex}{$2$}} & \smash{\raisebox{1.4ex}{$(0)$}} & $5.103\times 10^{-4}$\vrule height10pt width0pt\\
    \hline
    Heisenberg & $18$ & $2$ & --- & $1.239\times 10^{-2}$\vrule height10pt width0pt\\
    \hline
  \end{tabular}
  \caption{Root mean square error of the fit of the cumulative level density of the Inozemtsev,
    Haldane-Shastry and Heisenberg chains
    to a cumulative Gaussian distribution with parameters $\mu$ and $\sigma$ taken from the
    spectrum.}\label{table.RMSE}
\end{table}

\subsection{Exact formulas for the mean and variance of the energy}
Once it has been established that for sufficiently large $N$ the level density of Inozemtsev's
elliptic chain is well approximated by a Gaussian distribution with parameters $\mu$ and $\si$
taken from the spectrum, it is of interest to evaluate the latter parameters explicitly. Although
the spectrum of Inozemtsev's chain has not been computed exactly, we shall next see that it is
possible to obtain a closed-form expression for the mean and the variance of its levels as
functions of the number of sites $N$, the number of internal degrees of freedom $m$ and the
modulus $k$. Using this expression, we shall show that (for $0<k<1$) $\mu$ and $\si^2$ grow as
$N^3$ and $N^5$, respectively, just as is the case for the Haldane--Shastry chain~\cite{FG05}.

Let us begin with the mean energy
\[
\mu=m^{-N}\tr \cH=\frac1{4m^{N}}\,\sum_{i\ne j}h_{ij}\tr(1-S_{ij})\,,
\]
where
\[
h_{ij}=\sn^{-2}\biggl(2(i-j)\frac KN\biggr)\equiv h(i-j)\,.
\]
Clearly $\tr S_{ij}=m^{N-1}$, and therefore
\[
\mu=\frac{m-1}{4m}\sum_{i\ne j}h_{ij}\,.
\]
In order to simplify the latter sum, we note that the function $h(z)\equiv\sn^{-2}(2Kz/N)$ satisfies
\begin{equation}\label{hids}
h(z)=h(-z)=h(N-z)\,,\qquad z\in\CC\,,
\end{equation}
and therefore
\[
\sum_{i\ne j}h_{ij}=\sum_{i\ne j}h(i-j)=2\sum_{l=1}^{N-1}(N-l)h(l)\,.
\]
On the other hand, performing the change of index $l\mapsto N-l$ and using the
identity~\eqref{hids} we easily obtain
\[
\sum_{l=1}^{N-1}(N-l)h(l)=\sum_{l=1}^{N-1}l\,h(l)\quad\implies\quad 2\sum_{l=1}^{N-1}l\,
h(l)=N\sum_{l=1}^{N-1}h(l)\,,
\]
so that
\begin{equation}
  \label{hsum}
  \sum_{i\ne j}h_{ij}=N\sum_{l=1}^{N-1}h(l)\,.
\end{equation}
Thus the mean energy per site is given by
\begin{equation}\label{mu}
\frac{\mu}N=\frac{m-1}{4m}\,\sum_{l=1}^{N-1}h(l)\,.
\end{equation}

Likewise, the variance of the energy can be expressed as
\[
\si^2=m^{-N}\tr\big[(\cH-\mu)^2\big]=m^{-N}\tr\big[(\cH'-\mu')^2\big]=m^{-N}\tr(\cH'^2)-\mu'^2\,,
\]
where
\begin{equation}\label{cHp}
\cH'=\frac14\sum_{i\ne j}h_{ij}S_{ij}
\end{equation}
and
\begin{equation}\label{mup}
\mu'=m^{-N}\tr\cH'=\frac{1}{4m}\sum_{i\ne j}h_{ij}=\frac{N}{4m}\,\sum_{l=1}^{N-1}h(l)\,.
\end{equation}
The trace of $\cH'^2$ can be easily computed by noting that
\[
\tr(S_{ij}S_{ln})=\begin{cases}
  m^N\,,\qquad& \{i,j\}=\{l,n\}\,,\\
  m^{N-2}\,,&\text{otherwise.}
\end{cases}
\]
We thus have
\[
16 m^{-N}\tr(\cH'^2)=m^{-N}\sum_{i\ne j,k\ne l}h_{ij}h_{kl}\tr (S_{ij}S_{kl})=
m^{-2}\Big(\sum_{i\ne j}h_{ij}\Big)^2+2(1-m^{-2})\sum_{i\ne j}h_{ij}^2\,,
\]
whence, using Eqs.~\eqref{hsum} (with $h$ replaced by $h^2$) and~\eqref{mup}, we finally obtain
\begin{equation}
  \label{si2}
  \frac{\si^2}N=\frac{m^2-1}{8Nm^2}\,\sum_{i\ne j}h_{ij}^2=
  \frac{m^2-1}{8m^2}\,\sum_{l=1}^{N-1}h(l)^2\,.
\end{equation}
It should be noted that Eqs.~\eqref{mu} and~\eqref{si2} are actually valid for any spin chain of
the form
\begin{equation}
  \label{hchain}
  \cH=\frac14\sum_{i\ne j}h_{ij}(1-S_{ij})
\end{equation}
with $h_{ij}=h(i-j)$, provided only that the function $h$ satisfies the identities~\eqref{hids}.

By Eqs.~\eqref{mu} and~\eqref{si2}, the mean and variance of the energy are proportional to the
sums
\begin{equation}
  \label{Sp}
  S_p\equiv\sum_{j=1}^{N-1}h(j)^p\equiv\sum_{j=1}^{N-1}\sn^{-2p}\bigg(\frac{2jK}N\bigg)
\end{equation}
with $p=1,2$. In the trigonometric case ($k=0$), these sums have been evaluated in closed form (in
terms of Bernoulli numbers) by Berndt and Yeap~\cite{BY02}. They can be easily computed for
arbitrary $k\in[0,1]$ using the results in~\ref{app.sums}, as we shall next show.

Consider, to begin with, the sum $S_1$. Using Eqs.~\eqref{snwp} and~\eqref{S1p} with
\[
\om_3=\frac{\iu NK'}{2K}\equiv \frac{\iu N\tau}2
\]
we easily obtain
\[
S_1=2\,\bigg(\frac{N}{2K}\bigg)^2\big[\eta_1(1/2,\iu N\tau/2)-\eta_1(N/2,\iu
N\tau/2)\big]+\frac13(N-1)(1+k^2)\,,
\]
where $\eta_1$ is defined in Eq.~\eqref{etai}. From the homogeneity relation
\[
\eta_1(N/2,\iu N\tau/2)=\frac{2K}N\,\eta_1(K,\iu K')
\]
(cf.~Eq.~\eqref{hometai}) and the well-known identity~\cite[Eq.~18.9.13]{AS70}
\begin{equation}\label{eta1EK}
\eta_1(K,\iu K')=E-\frac13\,(2-k^2)K\,,
\end{equation}
where
\[
E(k)=\int_0^{\pi/2}\sqrt{1-k^2\sin^2\vp}\,\diff\vp
\]
is the complete elliptic integral of the second kind, we finally obtain
\begin{equation}
  \label{S1final}
  S_1=\frac{N^2}{2K^2}\,\eta_1(1/2,\iu N\tau/2)+N\bigg(1-\frac{E}{K}\bigg)-\frac13\,(1+k^2)\,.
\end{equation}
Likewise, from Eqs.~\eqref{snwp},~\eqref{eta1EK},~\eqref{S1p},~\eqref{wp2sum} (with $\om_3=\iu
N\tau/2$), and the identity
\[
\bigg(\frac{N}{2K}\bigg)^4g_2(N/2,\iu N\tau/2)=g_2(K,\iu K')=\frac43\,(k^4-k^2+1)
\]
(cf.~\ref{app.prel} and~\cite[Eq.~18.9.4]{AS70}), after a straightforward calculation we obtain
\begin{align}\
  S_2=&\frac1{960}\,\bigg(\frac{N}{K}\bigg)^4 g_2(1/2,\iu
  N\tau/2)+\frac13\,(1+k^2)\,\bigg(\frac{N}{K}\bigg)^2\eta_1(1/2,\iu
  N\tau/2)+\frac{N}3\,\bigg(2+k^2-2(1+k^2)\frac{E}{K}\bigg)\notag\\
  &-\frac1{45}\,(11k^4+4k^2+11)\,.
  \label{S2final}
\end{align}
Substituting Eqs.~\eqref{S1final}-\eqref{S2final} into~\eqref{mu}-\eqref{si2} we finally obtain
the following closed-form expressions for the mean $\mu$ and the variance $\si^2$ of the energy
per site of the $\mathrm{su}(m)$ Inozemtsev chain~\eqref{cH}:
\begin{align}
  \label{mufinal}
  \frac{\mu}{N}&=\frac{m-1}{4m}\,\bigg[\frac{N^2}{2K^2}\,\eta_1(1/2,\iu
  N\tau/2)+N\,\bigg(1-\frac{E}{K}\bigg)-\frac13\,(1+k^2)\bigg]\\
  \frac{\si^2}{N}&=\frac{m^2-1}{8m^2}\,\bigg[\frac1{960}\,\bigg(\frac{N}{K}\bigg)^4 g_2(1/2,\iu
  N\tau/2)+\frac13\,(1+k^2)\,\bigg(\frac{N}{K}\bigg)^2\eta_1(1/2,\iu
  N\tau/2)\notag\\
  &\hphantom{=\frac{m^2-1}{8m^2}\,\bigg[}\enspace+\frac{N}3\,\bigg(2+k^2-2(1+k^2)
  \frac{E}{K}\bigg)-\frac1{45}\,(11k^4+4k^2+11)\bigg]\,.
  \label{si2final}
\end{align}
It is instructive to check that when $k=0$ the above formulas reduce to the well-known ones for
the Haldane--Shastry chain derived in Ref.~\cite{FG05}. To this end, we shall make use of the
asymptotic expansions
\begin{equation}\label{eta1g2as}
\eta_1(1/2,\om_3)=\pi^2\bigg(\frac1{6}-4\sum_{k=1}^\infty\si_1(k)\,q^{2k}\bigg)\,,\qquad
g_2(1/2,\om_3)=20\pi^4\bigg(\frac1{15}+16\sum_{k=1}^\infty\si_3(k)\,q^{2k}\bigg)\,,
\end{equation}
where $\Im\om_3>0$, $q=\e^{2\pi\iu\om_3}$, and $\si_r(n)$ denotes the divisor function\footnote{By
  definition, $j|n$ if $n$ is a multiple of $j$.}
\[
\si_r(n)=\sum_{j|n}j^r
\]
(see, respectively,~\cite[Eq.~23.8.5]{OLBC10} and~\cite{SS03}). Thus
\begin{equation}\label{eta1g2lim}
  \eta_1(1/2,+\iu\infty)=\frac{\pi^2}6\,,
  \qquad g_2(1/2,+\iu\infty)=\frac43\,\pi^4\,,
\end{equation}
and therefore, taking into account that
\[
K(0)=\pi/2\,,\en K'(0)=+\infty\en\implies\en \tau(0)=+\infty\,,
\]
from Eqs.~\eqref{mufinal}-\eqref{si2final} with $k=0$ we readily obtain the well-know expressions
for the mean and variance of the energy per site of the Haldane--Shastry chain:
\[
\frac{\mu}N=\frac{m-1}{12m}\,(N^2-1)\,,\qquad\frac{\si^2}N=\frac{m^2-1}{360m^2}\,(N^2-1)(N^2+11)\,.
\] {}From the latter formulas it follows that for the Haldane--Shastry chain $\mu$ and $\si^2$
grow with $N$ respectively as $N^3$ and $N^5$. That this is the case for arbitrary $k\in[0,1)$ can
be easily shown from the asymptotic formula~\eqref{Sppasymp} and the identity~\eqref{snwp}, which
yield
\begin{align}\label{muasymp1}
  \frac{4m\ms\mu}{N(m-1)}&=S_1=\bigg(\frac{N}{2K}\bigg)^2\frac{(2\pi)^2}{2!}\,\frac16+O(N)=\frac1{12}\bigg(\frac{N\pi}{K}\bigg)^2+O(N)\,,\\
  \frac{8m^2\ms\si^2}{N(m^2-1)}&=S_2=\bigg(\frac{N}{2K}\bigg)^4\frac{(2\pi)^4}{4!}\,\frac1{30}+O(N^3)=\frac1{720}\bigg(\frac{N\pi}{K}\bigg)^4+O(N^3)\,.
\label{si2asymp1}
\end{align}

\begin{figure}[h]
\includegraphics[height=4.2cm]{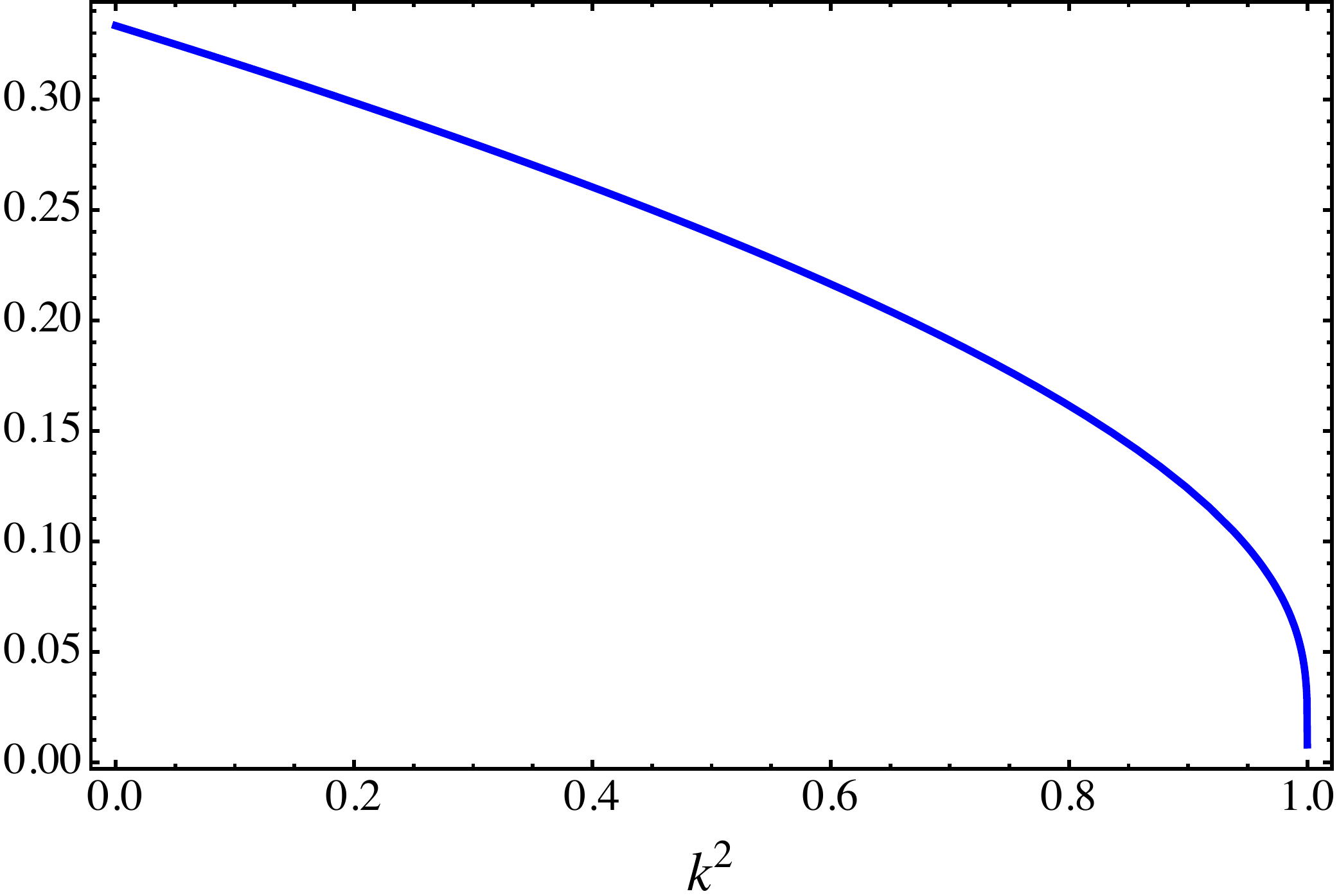}\hfill 
\includegraphics[height=4.2cm]{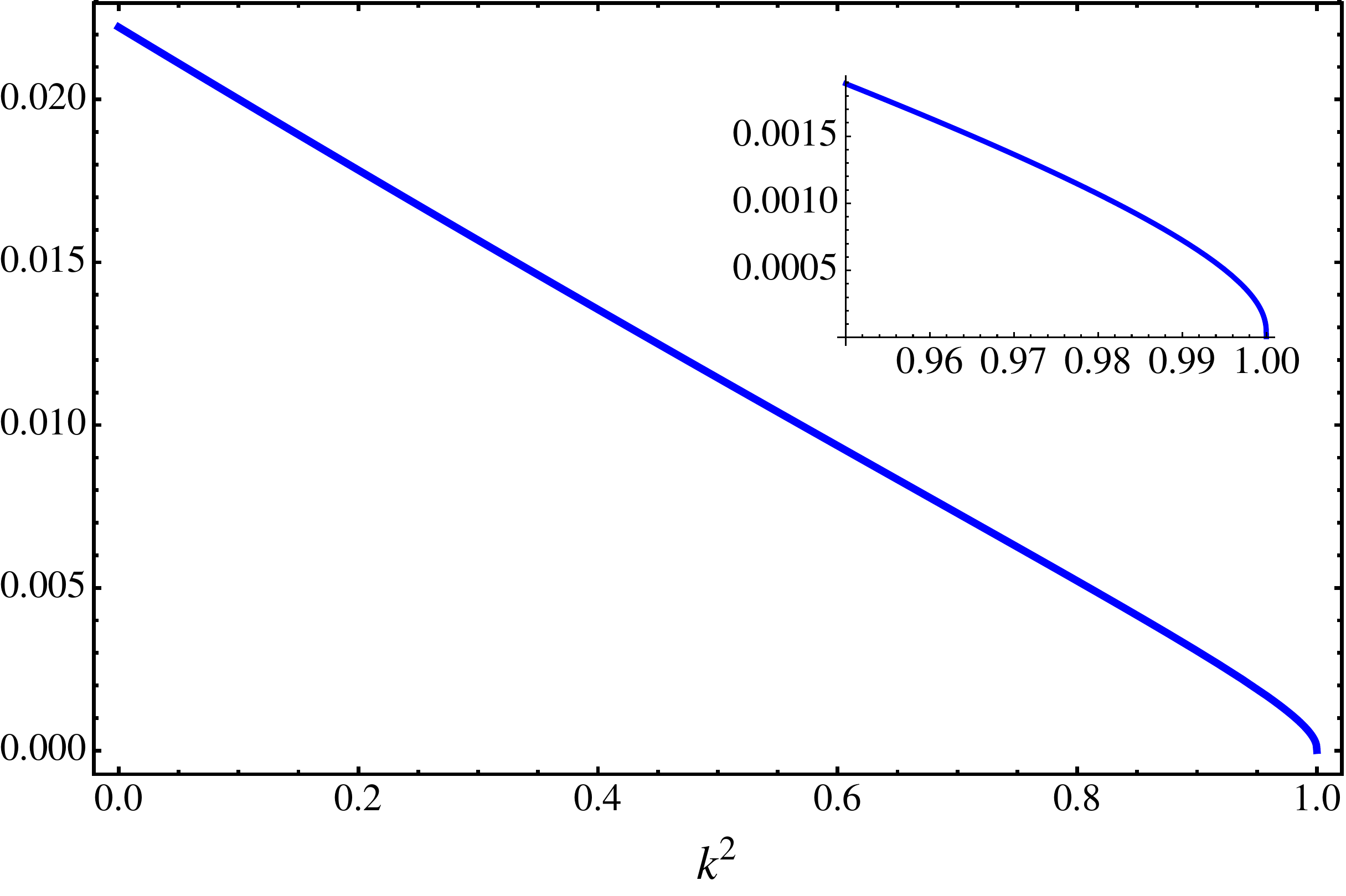}
\caption{Coefficients of $N^3$ (left) and $N^5$ (right) in the asymptotic expansions of
  $4\,m\,\mu/(m-1)$ and $8\,m^2\si^2/(m^2-1)$, respectively. Inset: behavior of the latter
  coefficient for $0.95\le k^2\le 1$.}
\label{fig.coeffs}
\end{figure}
Much sharper asymptotic formulas can be obtained using Eqs.~\eqref{eta1g2as} with $\om_3=\iu
N\tau/2\to+\iu\infty$, namely
\begin{align}
  \label{muasymp}
  \frac{4m\ms\mu}{N(m-1)}&=\frac1{12}\bigg(\frac{N\pi}{K}\bigg)^2+N\,\bigg(1-\frac{E}{K}\bigg)-\frac13\,(1+k^2)+O\big(N^2\e^{-2N\pi\tau}\big)\,,\\
  \frac{8m^2\ms\si^2}{N(m^2-1)}&=\frac1{720}\bigg(\frac{N\pi}{K}\bigg)^4+\frac1{18}\,(1+k^2)\,\bigg(\frac{N\pi}{K}\bigg)^2+\frac{N}3\,\bigg(2+k^2-2(1+k^2)
  \frac{E}{K}\bigg)\notag\\
  &\hphantom{=\frac1{720}\bigg(\frac{N\pi}{K}\bigg)^4}-\frac1{45}\,(11k^4+4k^2+11)+O\big(N^4\e^{-2N\pi\tau}\big)\,.
  \label{si2asymp}
\end{align}
In Figure~\ref{fig.coeffs} we have plotted the coefficients of $N^3$ and $N^5$ in the asymptotic
expansions of $4m\mu/(m-1)$ and $8m^2\si^2/(m^2-1)$ as functions of $k^2$. Since $K(1)=+\infty$,
both coefficients vanish as $k\to1$, a fact that is also apparent from the latter figure. This was
to be expected, since when $k\to1$ the Hamiltonian~\eqref{cH} tends to~\eqref{hchain} with
$h_{ij}=1$, so that both $4m\,\mu/(m-1)$ and $8m^2\si^2/(m^2-1)$ tend to $N(N-1)$ on account of
Eqs.~\eqref{mu}-\eqref{si2}. Note, however, that the asymptotic
formulas~\eqref{muasymp1}-\eqref{si2asymp1} and~\eqref{muasymp}-\eqref{si2asymp} are \emph{not}
valid when $k=1$, since $\tau$ vanishes in this case. In fact, it is straightforward to show that
Eqs.~\eqref{mufinal}-\eqref{si2final} yield the correct values of $\mu$ and $\si^2$ in the limit
$k\to 1$ with the help of the formulas
\[
\lim_{\om_3\to0}\big[(2\om_3)^2\eta_1(1/2,\om_3)\big]=\frac{\pi^2}6\,,\qquad
\lim_{\om_3\to0}\big[(2\om_3)^4g_2(1/2,\om_3)\big]=\frac43\,\pi^4\,,
\]
which can be easily derived from~\eqref{eta1g2lim} using the homogeneity properties of $\eta_1$
and $g_2$ and Legendre's relation~\eqref{Legendre}.

\subsection{Skewness and excess kurtosis}

Given a probability distribution $p(E)$ with mean $\mu$ and standard deviation $\si$, its
\emph{skewness} $\ga_1$ and (excess) \emph{kurtosis} $\ga_2$ are defined by
\[
\ga_1=\frac{\big\langle(E-\mu)^3\big\rangle}{\si^3}\,,\qquad
\ga_2=\frac{\big\langle(E-\mu)^4\big\rangle}{\si^4}-3\,,
\]
where the average $\big\langle f(E)\big\rangle$ is given by
\[
\big\langle f(E)\big\rangle=\int_{-\infty}^\infty f(E)p(E)\,\diff E\,.
\]
Since a normal distribution with arbitrary parameters $\mu$ and $\si$ has zero skewness and
kurtosis, a necessary condition for the level density of the Inozemtsev chain~\eqref{cH} to be
asymptotically normal is the vanishing as $N\to\infty$ of the skewness and kurtosis of its
spectrum, given by
\begin{equation}\label{skewkurt}
\ga_1=\frac{\tr\big[(\cH-\mu)^3\big]}{m^N\si^3}\,,\qquad
\ga_2=\frac{\tr\big[(\cH-\mu)^4\big]}{m^N\si^4}-3\,.
\end{equation}
In the rest of this section we shall
rigorously prove that this is indeed the case (cf.~Eqs.~\eqref{ga1est}
and~\eqref{ga2est}).

As a matter of fact, we shall compute the skewness and kurtosis of the spectrum of a general chain
of the form~\eqref{hchain}, where $h_{ij}=h(i-j)$ and $h$ satisfies the identities~\eqref{hids}.
Note, to begin with, that we obviously have
\begin{equation}\label{ga1ga2}
\ga_1=\frac{\tr\big[(\mu'-\cH')^3\big]}{m^N\si^3}\,,\qquad
\ga_2=\frac{\tr\big[(\cH'-\mu')^4\big]}{m^N\si^4}-3\,,
\end{equation}
where $\cH'$ and~$\mu'$ are respectively given by Eqs.~\eqref{cHp} and~\eqref{mup}. In the case of
the skewness, expanding $(\mu'-\cH')^3$ in powers of $\cH'$ and taking into account that
\[
m^{-N}\tr\big(\cH'^2\big)=\mu'^2+\si^2
\]
we easily obtain
\begin{equation}\label{ga1def}
\ga_1=\frac1{\si^3}\,\big(-m^{-N}\tr(\cH'^3\big)+3\mu'\si^2+\mu'^3\big)\,.
\end{equation}
By Eq.~\eqref{cHp}, in order to evaluate the trace of $\cH'^3$ we must compute the trace of the
product of three spin permutation operators. It is not difficult to show that
\[
\tr(S_{ij}S_{ln}S_{pq})=
\begin{cases}
  m^{N-1}\,,&\{i,j\}=\{l,n\}\en\text{or}\en\{i,j\}=\{p,q\}\en\text{or}\en\{l,n\}=\{p,q\}\\
  &\text{or}\en\card\{i,j,l,n,p,q\}=3\\
  m^{N-3}\,,&\text{otherwise}\,,
\end{cases}
\]
which after a long but straightforward calculation leads to the identity
\begin{multline*}
  64m^{3-N}\tr(\cH'^3)= \bigg(\sum_{i\ne j}h_{ij}\bigg)^3+6\,(m^2-1)\bigg(\sum_{i\ne
    j}h_{ij}^2\bigg)\bigg(\sum_{l\ne n}h_{ln}\bigg)+8\,(m^2-1)\bigg(
  {\sum_{i,j,l}}'h_{ij}h_{jl}h_{li}-\sum_{i\ne j}h_{ij}^3\bigg)\,.
\end{multline*}
Here, as in the rest of the paper, we have denoted by $\sum\nolimits'$ a sum in which no pair of
indices can take the same value. Combining the previous formula with Eqs.~\eqref{mup},
\eqref{si2}, and~\eqref{ga1def} we obtain the following explicit expression for the skewness of
the spectrum of a general spin chain of the form~\eqref{hchain}:
\begin{equation}
  \label{ga1}
  \ga_1=\frac{m^2-1}{8m^3\si^3}\,\bigg(\sum_{i\ne j}h_{ij}^3-\sump{i,j,l}h_{ij}h_{jl}h_{li}\bigg)\,.
\end{equation}
Using Eqs.~\eqref{hsum} (with $h$ replaced by $h^3$) and~\eqref{sumskew} to rewrite the last sum we
arrive at the  simplified expression
\begin{equation}
  \label{ga1final}
  \ga_1=\frac{m^2-1}{8m^3\si^3}\,N\,\bigg(S_3-2\sum_{1\le i<j\le N-1}h(i)\ms
  h(j)\ms h(j-i)\bigg)\,.
\end{equation}

Although the previous formula holds for any chain of the form~\eqref{hchain}, the behavior of
$\ga_1$ when $N\to\infty$ depends on the specific properties of the function $h$. In the case of
the Inozemtsev chain~\eqref{cH} the function $\sn x$ is monotonically increasing in the interval
$[0,K]$ and is symmetric about $K$, so that
\[
h(l)=\sn^{-2}\biggl(\frac{2lK}N\biggr)\le\sn^{-2}\biggl(\frac{2K}N\biggr)\,,\qquad 1\le l\le N-1\,.
\]
Hence
\begin{equation}\label{hihjest}
\sum_{1\le i<j\le N-1}h(i)\ms h(j)\ms h(j-i)\le \sn^{-2}\bigl(\tfrac{2K}N\bigr)
\sum_{1\le i<j\le N-1}h(i)\ms h(j)
=\frac12\,\sn^{-2}\bigl(\tfrac{2K}N\bigr)\,(S_1^2-S_2)\,,
\end{equation}
and using this estimate in Eq.~\eqref{ga1final} we obtain
\begin{equation}\label{gamma1est}
  -\frac{m^2-1}{8m^3}\,\frac{N}{\si^3}\,(S_1^2-S_2)\sn^{-2}\bigl(\tfrac{2K}N\bigr)\le \ga_1\le\frac{m^2-1}{8m^3}\,\frac{NS_3}{\si^3}\,.
\end{equation}
Since
\begin{equation}\label{SpN2p}
S_p\sim N^{2p}
\end{equation}
as $N\to\infty$ by Eqs.~\eqref{wpsn}~and~\eqref{Sppasymp}, and moreover
\[
\sn^{-2}\bigl(\tfrac{2K}N\bigr)\sim N^2\,,\qquad
\si^3\sim N^{15/2}\,,
\]
from~\eqref{gamma1est} we conclude that
\begin{equation}\label{ga1est}
\ga_1=O(N^{-1/2})\underset{N\to\infty}{\longrightarrow}0\,.
\end{equation}
In fact, our numerical calculations strongly suggest that $\ga_1$ tends to zero much faster than
$N^{-1/2}$ as $N\to\infty$, namely\footnote{By definition, $f(N)\sim N^p$ if $N^{-p}f(N)$ has a
  finite and non-vanishing limit as $N\to\infty$.}
\[
\ga_1\sim N^{-5/2}\,.
\]

\smallskip
Consider next the kurtosis~$\ga_2$. Expanding $(\cH'-\mu')^4$ in powers of $\cH'$ in the second
Eq.~\eqref{ga1ga2} and taking into account Eq.~\eqref{ga1def} we obtain
\begin{equation}
  \label{kurtdef}
  \ga_2=\frac1{\si^4}\,\big(m^{-N}\tr(\cH'^4)+4\ga_1\mu'\si^3-6\mu'^2\si^2-\mu'^4\big)-3\,.
\end{equation}
An analysis similar to the above for $\ga_1$, but considerably more involved, yields the following
expression for the kurtosis of the general chain~\eqref{hchain}:
\begin{equation}
  \label{ga2}
  \ga_2=\frac{m^2-1}{16m^4\si^4}\,\bigg((3-m^2)\sum_{i\ne j}h_{ij}^4-m^2\sump{i,j,l}
  h_{ij}^2h_{jl}^2+2(m^2-6)\sump{i,j,l} h_{ij}^2h_{jl}h_{li}+
  3\sump{i,j,l,n} h_{ij}h_{jl}h_{ln}h_{ni}\bigg)\,.
\end{equation}
Using Eqs.~\eqref{sum2}--\eqref{sum45} we can rewrite this equation more compactly as
\begin{multline}
   \ga_2=\frac{m^2-1}{16m^4\si^4}\,N\,\bigg(3S_4-m^2S_2^2+4(m^2-6)\sum_{1\le i<j\le
     N-1}h^2(i)\ms h(j)\ms h(j-i)\\
   +6\sum_{3\le i+j+l\le N-1}h(i)\ms h(j)\ms h(l)h(i+j+l)
   +12\sum_{3\le i+j+l\le N-1}h(i)\ms h(j)\ms h(i+l)h(j+l)\bigg)\,.
 \label{ga2simp}
\end{multline}
Each of the terms in parentheses in the latter equation is $O(N^8)$. Indeed, for the first two
terms this is an immediate consequence of Eq.~\eqref{SpN2p}. As for the third one, we have
\begin{align}
  \sum_{1\le i<j\le N-1}h^2(i)\ms h(j)\ms h(j-i)&=\frac12\sum_{1\le i<j\le
    N-1}\big[h^2(i)\ms h(j)+h^2(j)\ms h(i)\big]\,\ms h(j-i)\nonumber\\
  &=\frac12\sum_{1\le i\ne j\le N-1}h^2(i)\ms h(j)\ms
  h(j-i)\le\frac12\sn^{-2}\bigl(\tfrac{2K}N\bigr)\sum_{1\le i\ne j\le N-1}h^2(i)\ms h(j)\nonumber\\
  &=\frac12\sn^{-2}\bigl(\tfrac{2K}N\bigr)\,(S_1S_2-S_3)=O(N^8)\,.
  \label{sum3est}
\end{align}
The second sum in Eq.~\eqref{ga2simp} can be easily estimated by noting that
\[
\sum_{3\le i+j+l\le N-1}h(i)\ms h(j)\ms h(l)\ms h(i+j+l)\le
\sn^{-2}\bigl(\tfrac{2K}N\bigr)\sum_{3\le i+j+l\le N-1}h(i)\ms h(j)\ms h(l)
\]
and
\begin{multline*}
\sum_{3\le i+j+l\le N-1}h(i)\ms h(j)\ms h(l)=\sum_{\substack{3\le i,j,l\le N-1\\ i+j+l\ge
    2N+1}}h(i)\ms h(j)\ms h(l)\en
\implies\en\sum_{3\le i+j+l\le N-1}h(i)\ms h(j)\ms h(l)\le\frac12\,S_1^3\,,
\end{multline*}
so that
\begin{equation}
  \label{sum4est}
  \sum_{3\le i+j+l\le N-1}h(i)\ms h(j)\ms h(l)\ms h(i+j+l)\le
  \frac12\,\sn^{-2}\bigl(\tfrac{2K}N\bigr)\,S_1^3=O(N^8)\,.
\end{equation}
Likewise,
\begin{align}
  \sum_{3\le i+j+l\le N-1}&h(i)\ms h(j)\ms h(i+l)h(j+l)\le\sn^{-2}\bigl(\tfrac{2K}N\bigr)
  \sum_{3\le i+j+l\le N-1}h(i)\ms h(j)\ms h(i+l)\nonumber\\&=
  \sn^{-2}\bigl(\tfrac{2K}N\bigr)
  \sum_{\substack{1\le i,j,l\le N-1\\i<l,\ j+l\le N-1}}h(i)\ms h(j)\ms h(l)\le
  \frac12\,\sn^{-2}\bigl(\tfrac{2K}N\bigr)\,S_1^3=O(N^8)\,.
  \label{sum5est}  
\end{align}
From the estimates~\eqref{sum3est}--\eqref{sum5est} and the fact that $\si^4\sim N^{10}$ it immediately
follows that
\begin{equation}
  \label{ga2est}
  \ga_2=O(N^{-1})\underset{N\to\infty}{\longrightarrow}0\,.
\end{equation}
In fact, numerical evaluation of $\ga_2$ for $N$ up to $300$ and different values of $k^2$ shows
that
\[
\ga_2\sim N^{-1}
\]
as $N\to\infty$.

\section{Integrability in terms of statistical properties of the spectrum}\label{sec.int}

In this section we shall study several global properties of the spectrum of the Inozemtsev chain
which are of interest for analyzing the integrable versus chaotic behavior of a quantum system. As
is well known, a preliminary step in the analysis of a spectrum in the context of quantum chaos is
the computation of a smooth approximation to its (cumulative) level density. This is essentially
due to the fact that, in order to compare different spectra (or different parts of a given
spectrum), it is necessary to first normalize them so that the resulting spectra have an
approximately uniform level density. More precisely, let $E_0<E_1<\cdots<E_n$ be the distinct
energy levels of a (finite) quantum spectrum, with respective degeneracies $d_0,\dots,d_n$. If
$\vep(E)$ is a smooth approximation to the cumulative level density
\[
F(E)=\frac1D\sum_{i;E_i\le E}d_i\,,\qquad D\equiv\sum_{i=0}^nd_i\,,
\]
it can be easily shown~\cite{Ha01} that the level density of the ``unfolded'' spectrum
\[
\vep_i\equiv\vep(E_i)\,,\qquad i=0,\dots,n\,,
\]
is approximately equal to 1.

\subsection{Spacings distribution}

The first property that we shall consider is the distribution of normalized spacings between
consecutive levels of the unfolded spectrum, given by
\[
s_i=\frac{n(\vep_{i}-\vep_{i-1})}{\vep_n-\vep_0}\,,\qquad i=1,\dots,n\,.
\]
According to a well-known conjecture due to Bohigas, Giannoni and Schmit~\cite{BGS84}, in a
quantum system invariant under time reversal (with integer total spin, or invariant under
rotations around an axis) whose classical limit is chaotic the probability density $p(s)$ of these
consecutive spacings should approximately follow Wigner's law
\begin{equation}\label{wigner}
p(s)=\frac{\pi s}2\ms\e^{-\pi\ms s^2/4}\,,
\end{equation}
characteristic of the Gaussian orthogonal ensemble (GOE) in random matrix theory~\cite{Me04}. In
the integrable case, through a heuristic argument based on the semiclassical limit Berry and Tabor
conjectured that the probability density $p(s)$ should be Poissonian for a generic integrable
system~\cite{BT77}. This conjecture has been confirmed for a wide class of such systems of
physical interest, including the Heisenberg chain, the Hubbard model, the $t$-$J$
model~\cite{PZBMM93} and the chiral Potts model~\cite{AMV02}. On the other hand, in most chains of
HS type (including the original Haldane--Shastry and the Polychronakos--Frahm chains) the raw
spectrum is either exactly or approximately equispaced~\cite{BFGR08,BFGR08epl,BFG09}. This fact,
together with the Gaussian character of the level density, can be shown to imply~\cite{BFGR08epl}
that the cumulative spacings distribution $P(s)=\int_0^s p(t)\,\diff t$ of the \emph{whole}
spectrum is approximately of the form
\begin{equation}\label{Ps}
P(s)=1-\frac2{\sqrt\pi s_{\mathrm{max}}}\sqrt{\log\bigg(\frac{s_{\mathrm{max}}}{s}\bigg)}\,.
\end{equation}

The energies of Inozemtsev's chain~\eqref{cH} are not even approximately equispaced, so that it
should not be expected that the spacings distribution of its whole spectrum obey the latter
equation; in fact, it can be numerically verified that this is not the case. We shall therefore
follow a more conventional approach, studying the spectrum of the restriction of the Hamiltonian
to simultaneous eigenspaces of a suitable set of mutually commuting first integrals. We shall
restrict ourselves to the $\mathrm{su}(2)$ case, for which these first integrals can be taken as
the operators $\bS^2\equiv (S^x)^2+(S^y)^2+(S^z)^2$, $S^z$ and $T$, where
\[
S^{\al}=\frac12\sum_i\si_i^\al\qquad (\al=x,y,z)\,,
\]
$\si_i^\al$ denotes the Pauli matrix $\si^\al$ acting on the internal space of the $i$-th spin,
and $T$ is the translation operator along the chain, defined by
\[
T\ket{s_1,s_2,\dots,s_N}=\ket{s_2,s_3,\dots,s_N,s_1}\,.
\]
Since $T$ is obviously unitary and satisfies $T^N=1$, its eigenvalues are the $N$ roots of unity.
Thus the eigenstates of $T$ have well-defined (modulo $2\pi$) total momentum
$P_{\mathrm {tot}}=2j\pi/N$, with $j=0,1,\dots, N-1$. Note also that in the $\mathrm su(2)$ case
Eq.~\eqref{Sij} yields
\[
S_{ij}=\frac12(1+\bsi_i\cdot\bsi_j)\,,\qquad\bsi_i\equiv(\si_i^x,\si_i^y,\si_i^z)\,,
\]
from which it easily follows that the square of the total spin operator is given by
\[
\bS^2=\sum_{i<j}S_{ij}-\frac N4\,(N-4)\,.
\]

We have numerically computed the cumulative distribution $P(s)$ of Inozemtsev's chain~\eqref{cH}
for several values of $k$ and $N$, in subspaces with zero momentum and $S^z=0$ (for even $N$) or
$S^z=1/2$ (for odd~$N$), choosing the eigenvalue of the operator $\bS^2$ so as to obtain the
largest possible spectrum\footnote{Interestingly, the level density is also approximately Gaussian
  when the Hamiltonian is restricted to (sufficiently large) subspaces with well-defined
  eigenvalues of the operators $\bS^2$, $S^z$ and $P_{\tot}$, as is the case with the whole
  spectrum (cf.~Section~\ref{sec.LD}). Thus, in all the subspaces considered the unfolding
  function $\vep(E)$ can be simply taken as the cumulative distribution function~\eqref{CGDF} of a
  normal distribution with parameters equal to the mean and the standard deviation of the
  corresponding subspectrum.}. It is apparent in all cases that approximately 25\% of the spacings
are several orders of magnitude larger than the rest, indicating that when $0<k<1$ the highly
degenerate and approximately equispaced spectrum of the Haldane--Shastry chain ($k=0$) splits into
multiple subspectra whose mutual distance is much larger than the typical spacing within each
subspectrum. When this fact is taken into account by removing the largest spacings and
renormalizing the remaining ones to mean $1$, the renormalized spacings distribution is very well
approximated by Poisson's law~(cf.~Fig.~\ref{fig.spacings} for $k^2=1/2$). Thus the spacings
distribution of Inozemtsev's chain behaves as predicted by the Berry--Tabor conjecture for a
``generic'' integrable system like, e.g., the Heisenberg chain.

\begin{figure}[ht]
  \centering
\includegraphics[width=8cm]{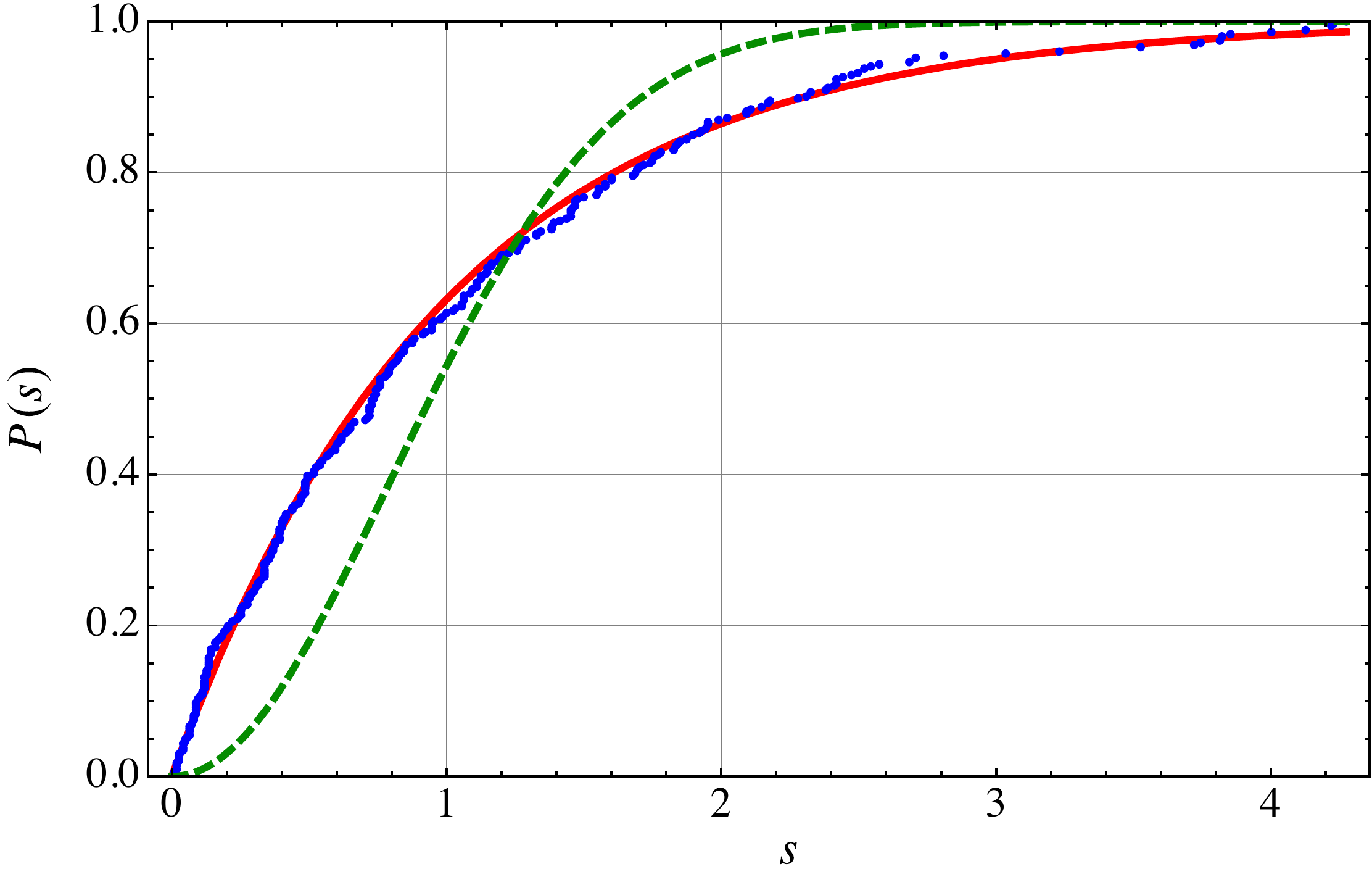}
\caption{Cumulative spacings distribution $P(s)$ of Inozemtsev's chain with $k^2=1/2$, $m=2$ and
  $N=18$ in the subspace with total spin $S=2$, $S_z=0$, $P_{\tot}=0$ after removing the largest
  25\% spacings (blue dots), compared to Poisson's law (continuous red line) and Wigner's law
  (dashed green line). The RMS error of the fit of $P(s)$ to Poisson's distribution is
  $1.33\cdot 10^{-2}$.}
\label{fig.spacings}
\end{figure}%

\subsection{Spectral noise}

In recent years, a test for detecting quantum integrability versus chaotic behavior
by analyzing a different characteristic of the spectral statistics
has been proposed by Rela\~no et al.~\cite{RGMRF02,FGMMRR04}. The test is based on the Fourier
analysis of the fluctuations of the spacings of the unfolded spectrum from their mean (the
so-called ``spectral noise''). More precisely, one considers the statistic
\[
\de_j=\sum_{i=1}^j(s_i-1)\,,\qquad j=1,\dots,n\,,
\]
and its discrete Fourier transform
\[
\hde_\nu=\frac1{\sqrt n}\sum_{j=1}^n\de_j\,\e^{-2\pi\iu j\nu/n}\,,\qquad \nu=1,\dots,n\,.
\]
It was conjectured in Ref.~\cite{RGMRF02} that the power spectrum
$\cP(\nu)\equiv\big|\hde_\nu\big|^2$ of an integrable system should behave as $\nu^{-2}$ (blue
noise) for sufficiently small values of $\nu$, while for a fully chaotic system $\cP(\nu)$ should
fall off as $\nu^{-1}$ (pink noise). In fact, these power laws have been theoretically justified for
integrable systems with Poissonian spacings and Gaussian random matrix ensembles,
respectively~\cite{FGMMRR04}. The conjecture has also been confirmed in subsequent publications
for quantum billiards and several random matrix ensembles~\cite{SB05,MCD07,Re08}. Again, spin
chains of HS type seem to be somewhat exceptional also in this respect, since the power spectrum
of both the Haldane--Shastry and the Polychronakos--Frahm chain behaves as $\nu^{-4}$ (black
noise) for small $\nu$~\cite{BFGR10}.

We have evaluated numerically the power spectrum of Inozemtsev's chain with $N=18$ spins $1/2$ and
$k^2=0.05, 0.1,\dots,0.95$ in subspaces with\footnote{Since the total momentum is defined up to an
  integer multiple of $2\pi$ its possible values can be taken as $j\pi/9$, with
  $j=0,\pm1,\dots,\pm8,9$. The subspaces with $P_\tot=\pm j\pi/9$ for $j=1,\dots,8$ (and fixed $S^z$,
  $\bS^2$) have the same energies, and can therefore be merged. The two remaining subspaces with
  $j=0,9$ have been discarded, since their dimension is smaller than that of the other ones
  roughly by a factor of $2$. Similarly, the subspaces with $S=0$ and $S=4,5,\dots,9$ have been
  dropped due to their smaller dimension.} $S^z=0$, total spin~$S=1,2,3$ and $|P_\tot|=j\pi/9$
($j=1,\dots,8$). For each value of $k^2$, we have averaged $\cP(\nu)$ over the $3\times 8=24$
subspectra under consideration, removing an appropriate number of energies when necessary to make
all of them of equal length. As in the study of the spacings distribution in the previous
subsection, we have also dropped the largest 25\% spacings in each of these subspaces prior to the
computation of the power spectrum. Our analysis shows that in all cases $\cP(\nu)$ is proportional
to $\nu^{-\al}$ (for $1\le \nu\le n/4$), where the exponent $\al$ varies very slightly with $k^2$
between $2.097$ (for $k^2=0.15$) to $2.145$ (for $k^2=0.05$). For instance, for $k^2=1/2$ the
exponent $\al$ is equal to $2.100$, with coefficient of determination $r^2=0.9870$. For comparison
purposes, we have performed a similar analysis for the $\mathrm{su}(2)$ Heisenberg chain with
$N=18$ spins, finding that $\al=1.916$ with coefficient of determination $r=0.9882$ (see
Fig.~\ref{fig.Pnu}). These facts clearly suggest that Inozemtsev's chain is closer to more
``standard'' integrable systems like the Heisenberg chain than to the Haldane--Shastry ($k^2=0$)
or Polychronakos--Frahm chains, even for values of $k^2$ close to $0$.

\begin{figure}[ht]
  \centering
\includegraphics[width=8cm]{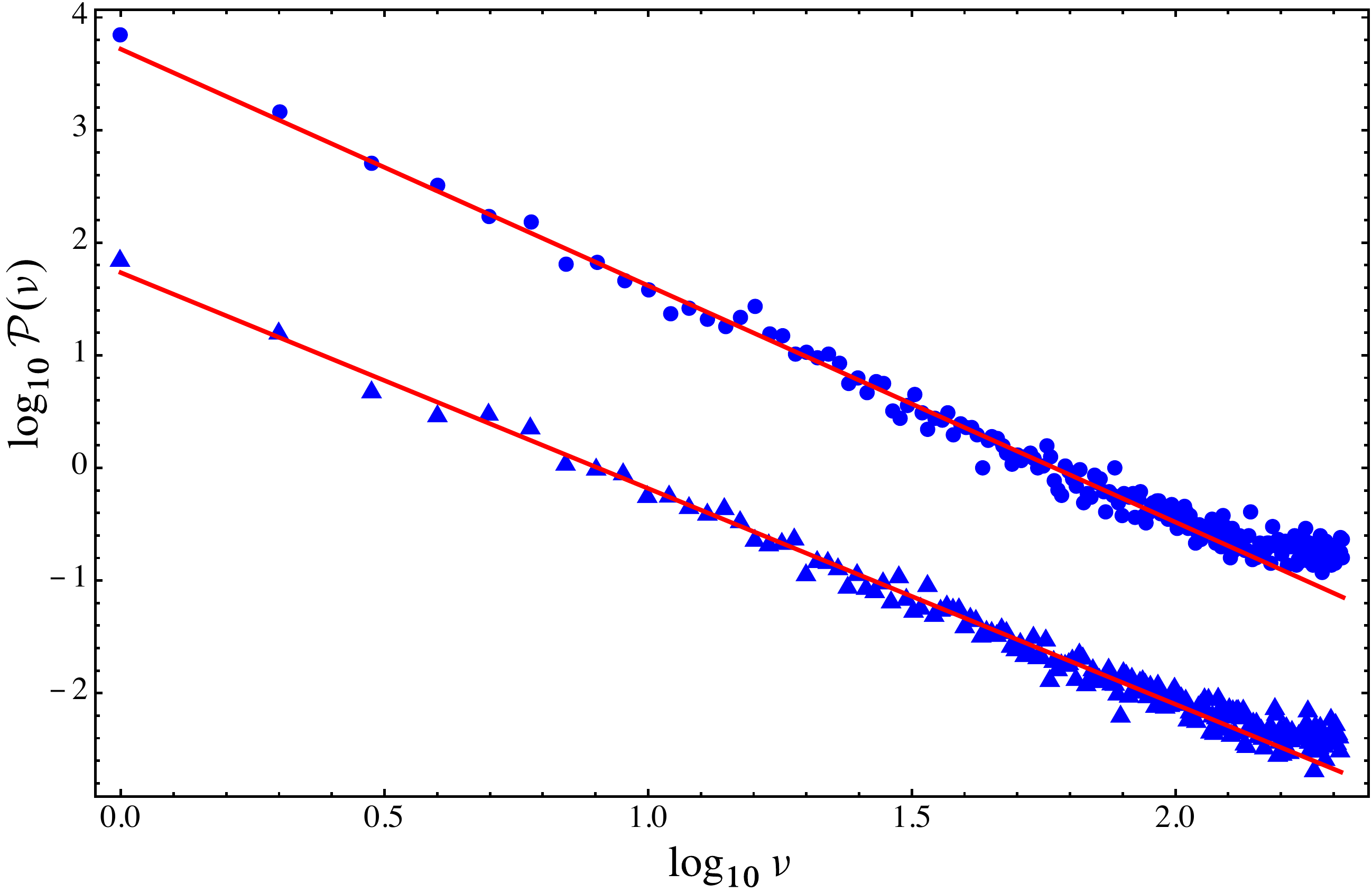}
\caption{Blue dots: $\log$-$\log$ plot of $\cP(\nu)$ for the spin chain~\eqref{cH} with $18$ spins
  $1/2$, $k^2=0.5$ and $S^z=0$, computed by averaging over 24 subspectra of $n=416$ spacings with
  quantum numbers $S=1,2,3$ and $|P_\tot|=\pi/9,2\pi/9,\dots,8\pi/9$. Blue triangles: analogous
  plot for the spin $1/2$ Heisenberg chain~\eqref{cHhe} with $18$ spins and $S^z=0$ (24 subspectra
  with $n=412$ spacings each, with the same quantum numbers as before). The solid red lines
  represent the best-fit straight lines to the data in the range $1\le \nu\le n/4$. (The plot for
  the Heisenberg chain has been lowered to avoid overlapping).}
\label{fig.Pnu}
\end{figure}%

\subsection{Average degeneracy}

It is well-known that the energy spectrum of all spin chains of HS type associated with the
$A_{N-1}$ root system is highly degenerate, much more so than is the case with more typical
integrable models like the Heisenberg chain. For the original Haldane--Shastry and
Polychronakos--Frahm chains, this high degeneracy is ultimately due to the invariance under
suitable realizations of the Yangian $Y\big(\mathrm{su}(m)\big)$ \cite{HHTBP92,Hi95npb}. An
important consequence of the Yangian symmetry of these chains is their equivalence to a classical
vertex model with energies given by~\cite{BBH10}
\begin{equation}\label{vM}
E_{\bn}=\sum_{i=1}^{N-1}\cE(i)\,\de(n_i,n_{i+1})\,,\qquad \bn\equiv(n_1,\dots,n_N)\in\{1,\dots,m\}^N\,,
\end{equation}
where
\begin{equation}\label{motifs}
\de(i,j)=\begin{cases}
    1,&i>j\\
    0, &i\le j
    \end{cases}
\end{equation}
and
\begin{equation}\label{cE}
\cE(i)=
\begin{cases}
  i(N-i)\,,& \HS\\
  i\,,& \PF\,.
\end{cases}
\end{equation}
In fact, Eq.~\eqref{vM} also holds for the Frahm--Inozemtsev (hyperbolic) chain~\cite{FI94} with
dispersion relation $\cE(i)=i(\al+i-1)$, where $\al$ is the positive parameter appearing in this
model~\cite{BFGR10,BBH10}. In view of these facts, it is natural to enquire whether the elliptic
chain~\eqref{cH} is equivalent to a vertex model~\eqref{vM}-\eqref{motifs} for a suitable choice
of the dispersion relation $\cE(i)$. As it turns out, the number of distinct levels
$\ell\equiv\ell(N,m)$ of the model~\eqref{vM}-\eqref{motifs} admits an upper bound which is
independent of the dispersion relation~$\cE(i)$. More precisely, it can be shown~\cite{FG14} that
\begin{equation}\label{upperbound}
  \ell\le F^{(m)}_{N+m-1},
\end{equation}
where $F^{(m)}_n$ is the $n$-th $m$-Fibonacci number~\cite{Mi60} defined by $F^{(m)}_0=\cdots=
F^{(m)}_{m-2}=0$, $F^{(m)}_{m-1}=1$, and
\[
F^{(m)}_n=\sum_{j=1}^m F^{(m)}_{n-j}\,,\quad n\ge m\,.
\]
Equivalently, the {\em average degeneracy} $\bar d\equiv m^N/\ell$
of the model~\eqref{vM}-\eqref{motifs} satisfies the inequality
\begin{equation}\label{bard}
\bar d\ge\frac{m^N}{F^{(m)}_{N+m-1}}\,,
\end{equation}
regardless of the functional form of the dispersion relation. In particular, a necessary condition
for the ${\mathrm su}(m)$ Inozemtsev chain to be equivalent to a vertex model of the
form~\eqref{vM}-\eqref{motifs} is that its average degeneracy $\bar d$ satisfy the
inequality~\eqref{bard}. In order to test this fact in the simplest case $m=2$, we have
numerically computed the average degeneracy of the ${\mathrm su}(2)$ chain~\eqref{cH} for
$10\le N\le18$ and $k^2=0.1,0.2,\dots,0.9$. In fact, for these values of $k^2$ we have found that
$\bar d$ is independent of $k^2$ (up to small numerical fluctuations), and thus can be regarded as
a function of $N$ only. Our calculations clearly show that the function $\bar d(N)$ is definitely
smaller than the RHS of Eq.~\eqref{bard} in the range under consideration;
cf.~Fig.~\ref{fig.degeneracies}. Thus for $10\le N\le 18$ the elliptic chain~\eqref{cH} cannot be
equivalent to a vertex model of the form~\eqref{vM}-\eqref{motifs}. Moreover, from
Fig.~\ref{fig.degeneracies} it is also apparent that the RHS of Eq.~\eqref{bard} grows much faster
with $N$ than the elliptic chain's average degeneracy $\bar d(N)$. This strongly suggests that
Inozemtsev's chain is not equivalent to \emph{any} vertex model~\eqref{vM}-\eqref{motifs} for
arbitrary $N$. This conclusion is consistent with the widespread belief that Inozemtsev's chain
does not possess the Yangian symmetry for finite values of $N$. As a further confirmation of this
assertion, we have numerically verified that for $10\le N\le18$ the average degeneracy of
Inozemtsev's $\mathrm{su}(2)$ chain essentially coincides with that of the spin~$1/2$ Heisenberg
chain~\eqref{cHhe}, which is known to be invariant under the Yangian only in an infinite
lattice~\cite{Be93} (see again~Fig.~\ref{fig.degeneracies}).
\begin{figure}[ht]
  \centering
\includegraphics[width=8cm]{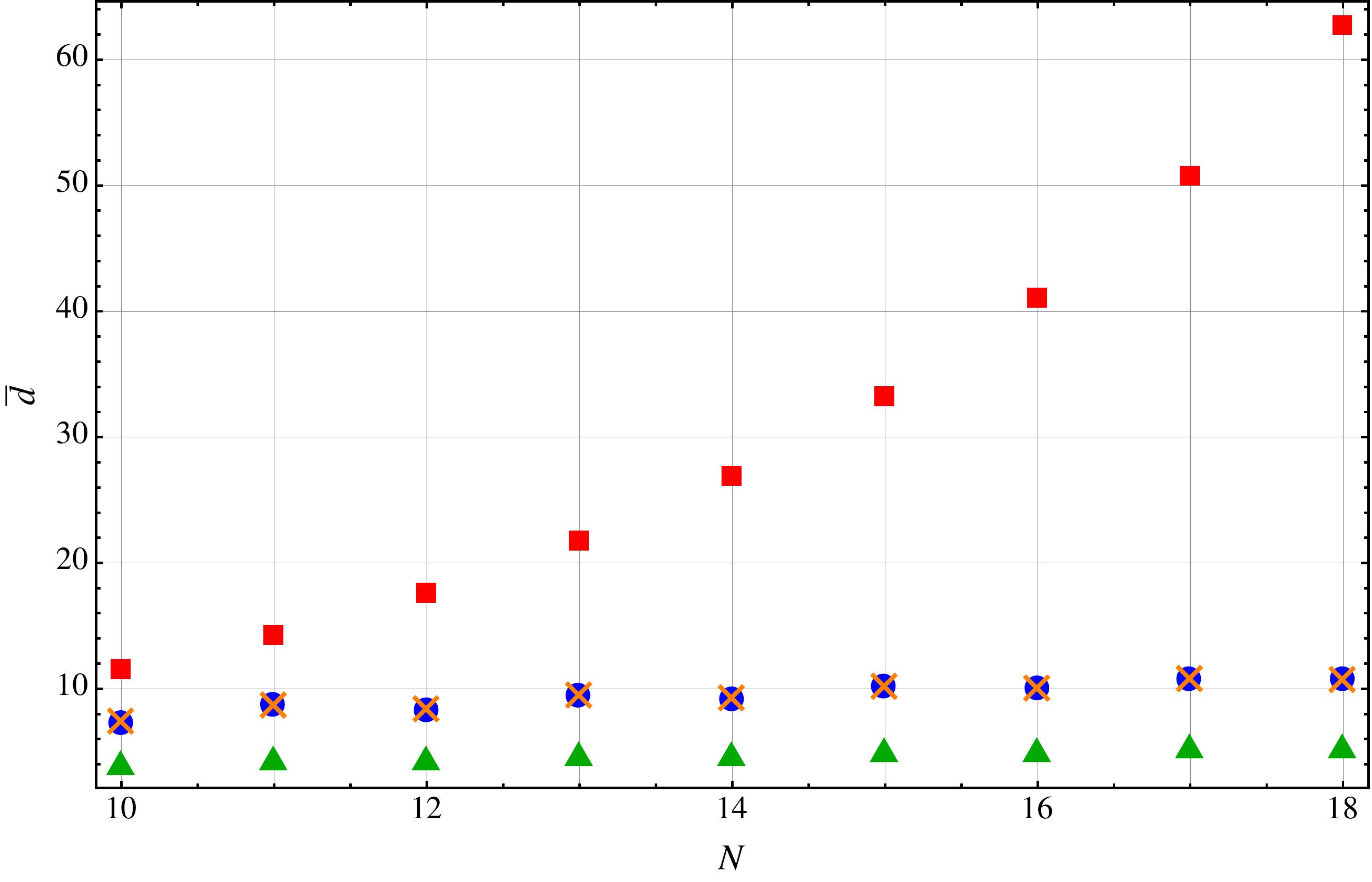}
\caption{Average degeneracy $\bar d$ of Inozemtsev's $\mathrm{su}(2)$ chain (blue dots) as a
  function of the number of spins $N$, vs.~the minimum average degeneracy of a vertex model of the
  form~\eqref{vM}-\eqref{motifs}, given by the RHS of Eq.~\eqref{bard} (red squares). For
  comparison purposes, we have also plotted the average degeneracy of the $\mathrm{su}(2)$
  Heisenberg chain (orange crosses) and of a rotationally and translationally
  invariant spin chain~\eqref{hchain} with random interactions (green triangles).}
\label{fig.degeneracies}
\end{figure}%

\section{Conclusions}\label{sec.conc}

In this paper we have analyzed the integrability of the spin chain with long-range elliptic
interactions introduced by Inozemtsev~\cite{In90}. Our approach bypasses the considerable
technical difficulties involved in the explicit construction of a complete set of commuting
first-integrals, relying instead on the analysis of several statistical properties of the spectrum
which have been conjectured to characterize integrable vs.~chaotic behavior. More precisely, we
have studied the distribution of the spacing between consecutive (unfolded) levels and the power
spectrum of the spectral fluctuations. In both cases our results are clearly consistent with the
generally held assumption that Inozemtsev's chain is indeed integrable. As mentioned in the
Introduction, the integrability of Inozemtsev's chain has interesting implications in the context
of the AdS-CFT correspondence, where this model has often been proposed as the integrable chain
describing non-perturbatively planar $\cN=4$ gauge theory.

Since Inozemtsev's spin chain tends to the Haldane--Shastry chain when the modulus $k$ tends to
zero, while it is related in a simple way to the Heisenberg chain when $k$ tends to one, it is
natural to compare it with its two limiting cases. Interestingly, our analysis shows that
Inozemtsev's chain shares certain properties with both the Heisenberg and the HS chains. Thus,
we have found that the distribution of the spacing between consecutive levels and the spectral
noise of Inozemtsev's chain clearly resemble those of the Heisenberg rather than the HS chain.
This similarity between the Inozemtsev and the Heisenberg chains is even more apparent if one
considers the average spectral degeneracy, as we have seen in the previous section. This is not
surprising, since the extremely high average degeneracy of the HS chain is related to its
underlying Yangian symmetry, which is absent in the Heisenberg and (most likely) the Inozemtsev
chains. In contrast, our numerical calculations strongly suggest that the level density of the
Inozemtsev chain becomes Gaussian as the number of spins tends to infinity. As we have mentioned
in Section~\ref{sec.LD}, this statistical property of the spectrum is shared by all spins chains
of HS type related to the $A_{N-1}$ root system.

Apart from the results just mentioned, most of which are ultimately of a numerical nature, we have
also derived several exact formulas that are also of interest from a mathematical viewpoint. Thus,
the exact calculation of the mean and the standard deviation of the spectrum in
Section~\ref{sec.LD} relies on the evaluation of the sum of certain powers of the Weierstrass
elliptic function. We have shown in~\ref{app.sums} how to compute in closed form these sums for
any positive power, and have also derived their asymptotic behavior when the number of terms tends
to infinity. Both of these results generalize their classical analogs for trigonometric
functions~\cite{BY02}, well known in the mathematical literature.

\section*{Acknowledgments}
This work was supported in part by Spain's MINECO under grant no.~FIS2011-22566. The authors would
like to thank Vladimir Inozemtsev for useful discussions on the elliptic chain's integrability and
Armando Rela\~no for enlightening conversations on quantum chaos.

\appendix
\section{Some useful identities relating Weierstrass and Jacobi elliptic
  functions}\label{app.prel}
Let $\wp(z;\om_1,\om_3)$ denote the Weierstrass's elliptic function~\cite{La89} with fundamental
half-periods $\om_1$ and $\om_3$ (where $\Im(\om_3/\om_1)>0$), defined by
\[
\wp(z;\om_1,\om_3)=\frac1{z^2}+\sum_{(l,n)\in\ZZ^2\setminus{\{0\}}}\bigg[\frac1{(z-2l\om_1-2n\om_3)^2}
  -\frac1{(2l\om_1+2n\om_3)^2}\bigg]\,.
\]
We shall adopt the abbreviated notation $\wp(z)$ when the periods of $\wp$ are clear from the
context. The function $\wp$ is even, doubly periodic and meromorphic, with a double pole on the
sites of the period lattice $2l\om_1+2n\om_3$ ($l,n\in\ZZ$). From the previous formula it is
obvious that $\wp$ is real on the real axis when $\om_1$ is real and $\om_3$ is pure imaginary,
and that it satisfies the homogeneity relation
\begin{equation}\label{wpdil}
\wp(\la z;\la\om_1,\la\om_3)=\la^{-2}\wp(z;\om_1,\om_3)\,.
\end{equation}
The function $\wp$ is minus the derivative of the Weierstrass $\ze$ function defined by
\[
\ze(z)\equiv\ze(z;\om_1,\om_3)=\frac1{z}+\sum_{(l,n)\in\ZZ^2\setminus{\{0\}}}\bigg[\frac1{z-2l\om_1-2n\om_3}
+\frac{1}{2l\om_1+2n\om_3}+\frac{z}{(2l\om_1+2n\om_3)^2}\bigg]\,,
\]
which is an odd meromorphic function with simple poles on the lattice $2l\om_1+2n\om_3$
($l,n\in\ZZ$). Note, however, that $\ze$ is \emph{not} $(2\om_1,2\om_3)$-periodic, but rather
verifies
\begin{equation}\label{quasi-p}
\ze(z+2\om_i)=\ze(z)+2\eta_i\,,
\end{equation}
where
\begin{equation}\label{etai}
\eta_i\equiv\eta_i(\om_1,\om_3)=\ze(\om_i;\om_1,\om_3)\,.
\end{equation}
The function $\ze$ obviously satisfies the homogeneity relation
\[
\ze(\la z;\la\om_1,\la\om_3)=\la^{-1}\ze(z;\om_1,\om_3)\,;
\]
in particular,
\begin{equation}
  \eta_i(\la\om_1,\la\om_3)=\la^{-1}\eta_i(\om_1,\om_3)\,.
  \label{hometai}
\end{equation}
In addition, the numbers $\eta_i$ are related by Legendre's identity
\begin{equation}
  \label{Legendre}
  \eta_1\om_3-\eta_3\om_1=\frac{\iu\ms\pi}2\,.
\end{equation}
The Laurent series of $\wp$ around the origin has the form
\begin{equation}
  \label{Laurentwp}
  \wp(z)=\frac1{z^2}+\frac{g_2}{20}\,z^2+\frac{g_3}{28}\,z^4+[z^6]\,,
\end{equation}
where $[z^6]$ denotes a function analytic at the origin with a zero of order at least $6$ at this
point, and the \emph{invariants} $g_i\equiv g_i(\om_1,\om_3)$ are given by
\[
g_2=60\sum_{(l,n)\in\ZZ^2\setminus\{0\}}(2l\om_1+2n\om_3)^{-4}\,,\qquad
g_3=140\sum_{(l,n)\in\ZZ^2\setminus\{0\}}(2l\om_1+2n\om_3)^{-6}\,.
\]
These definitions obviously imply the homogeneity relations
\[
g_2(\la\om_1,\la\om_3)=\la^{-4}g_2(\om_1,\om_3)\,,\qquad
g_3(\la\om_1,\la\om_3)=\la^{-6}g_2(\om_1,\om_3)\,.
\]
As is well known, the derivative of $\wp$ is an algebraic function of $\wp$, namely
\[
\wp'^2=4\wp^3-g_2\wp-g_3\,.
\]
From this identity it is straightforward to show that the invariants of $\wp$ are related to the
numbers
\begin{equation}\label{eis}
e_i=\wp(\om_i)\,,\qquad i=1,2,3\,,
\end{equation}
(where $\om_2\equiv -\om_1-\om_3$ by definition) by the well-known formula
\[
4s^3-g_2s-g_3=4(s-e_1)(s-e_2)(s-e_3)\,.
\]
It is important to recall that when $\om_1,\iu\om_3\in\RR$ all the numbers $e_i$, $g_i$ are real
and satisfy the inequalities
\begin{equation}
  e_1>e_2>e_3\,,\qquad g_3^2>27 g_2^3\,.
  \label{eidelta}
\end{equation}

We shall denote by $\sn(z,k)$ (or, in abbreviated fashion, $\sn z$) Jacobi's elliptic sine with
\emph{modulus} $k$. Its fundamental periods are $4K(k)$ and $2\iu K'(k)$, where
\begin{equation}\label{K}
K(k)=\int_0^{\pi/2}\frac{\diff x}{\sqrt{1-k^2\sin^2x}}
\end{equation}
is the complete elliptic integral of the first kind, and
\[
K'(k)=K(k')\,,\qquad k'\equiv\sqrt{1-k^2}\,.
\]
When $0\le k\le 1$ the numbers $K\equiv K(k)$ and $K'\equiv K'(k)$ are both real (with
$K(1)=K'(0)=\infty$), and $\sn z$ is real for real values of $z$. It is also well known that for
$k=0,1$ Jacobi's elliptic sine reduces to an elementary function, namely
\begin{equation}\label{snsin}
\sn(z,0)=\sin x\,,\qquad \sn(z,1)=\tanh z\,.
\end{equation}

The elliptic sine is related to the Weierstrass $\wp$ function by the well-known identity
\begin{equation}\label{wpsn}
\frac1{\sn^2 z}=\wp(z;K,\iu K')+\frac13(1+k^2)
\end{equation}
(cf.~\cite[Eq.~6.9.11]{La89}). Conversely, given two nonzero complex numbers $\om_1,\om_3$ with
$\Im(\om_3/\om_1)>0$ it is shown in the latter reference that
\begin{equation}\label{wpsngen}
\wp(z;\om_1,\om_3)=e_3+\frac{e_1-e_3}{\sn^2(\sqrt{e_1-e_3}\,z,k)}\,,
\end{equation}
where the modulus of the elliptic sine is determined by
\begin{equation}\label{kei}
k^2=\frac{e_2-e_3}{e_1-e_3}\,.
\end{equation}
In particular, when $\om_1$ and $\iu\om_3$ are both real the previous identity and the
inequalities~\eqref{eidelta} imply that $k^2$ is a real number in the interval $(0,1)$. Note also
that from Eq.~\eqref{wpsngen} it immediately follows the important relation
\[
\frac{\om_3}{\om_1}=\frac{\iu K'}{K}\equiv\iu\tau\,.
\]

\section{Extrema of translation-invariant functions}\label{app.lemmas}

\begin{lem}
  Let $U$ be a scalar real-valued function of class $C^2$ in an open subset $\Om\subset\RR^N$.
  Suppose that there is a fixed vector $\bv\in\RR^N$ such that $\Om$ is invariant under translations
  in the direction of $\bv$, and
  \[
  U(\bx+\la \bv)=U(\bx)\,,\qquad\all \bx\in\Om\,,\qquad\all\la\in\RR\,.
  \]
  If $\bx_0\in\Om$ is a critical point of $U$ satisfying
  \begin{equation}\label{D2gx0}
    \big(\bh,D^2U(\bx_0)\cdot \bh\big)>0\,,\qquad \all \bh\in(\RR \bv)^\perp\,,\en \bh\ne0\,,
  \end{equation}
  then $U$ has a local minimum at $\bx_0$.
\end{lem}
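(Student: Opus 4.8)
The plan is to exploit the translation invariance in order to eliminate the degenerate direction $\bv$ and reduce the statement to the ordinary (non-degenerate) second-derivative test on the hyperplane $(\RR\bv)^\perp$. The key observation is that the hypotheses force the full Hessian $D^2U(\bx_0)$ to be merely positive \emph{semi}definite: differentiating the invariance identity $U(\bx+\la\bv)=U(\bx)$ once gives $DU(\bx)\cdot\bv=0$ identically, and differentiating again shows that $D^2U(\bx)\cdot\bv=0$ for every $\bx$, so $\bv$ lies in the kernel of the Hessian. Hence the naive second-derivative test cannot be applied directly. However, condition \eqref{D2gx0} asserts that the Hessian is strictly positive on the complementary subspace, while along $\bv$ the function is exactly constant; combining these two facts should yield a (non-strict) local minimum.

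Concretely, first I would decompose $\RR^N=\RR\bv\oplus(\RR\bv)^\perp$ and define the restricted function $\widetilde U(\bh)=U(\bx_0+\bh)$ for $\bh$ in a small neighborhood of the origin in $(\RR\bv)^\perp$; since $\Om$ is open and contains $\bx_0$, this is well defined and of class $C^2$ there. Next I would check that $0$ is a critical point of $\widetilde U$ whose Hessian is the restriction of $D^2U(\bx_0)$ to $(\RR\bv)^\perp$: indeed $D\widetilde U(0)\cdot\bk=DU(\bx_0)\cdot\bk=0$ because $\bx_0$ is a critical point of $U$, while $\big(\bk,D^2\widetilde U(0)\cdot\bk\big)=\big(\bk,D^2U(\bx_0)\cdot\bk\big)$ for every $\bk\in(\RR\bv)^\perp$. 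By hypothesis \eqref{D2gx0} the latter quadratic form is positive definite, so the classical second-derivative test in $\RR^{N-1}$ furnishes a neighborhood $V$ of $0$ in $(\RR\bv)^\perp$ with $\widetilde U(\bh)>\widetilde U(0)=U(\bx_0)$ for all $\bh\in V\setminus\{0\}$.

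Finally, I would lift this back to a full neighborhood of $\bx_0$ in $\RR^N$. Given $\bx$ near $\bx_0$, I write the orthogonal decomposition $\bx-\bx_0=t\bv+\bh$ with $t\in\RR$ and $\bh\in(\RR\bv)^\perp$; since $|\bh|\le|\bx-\bx_0|$, the component $\bh$ lies in $V$ as soon as $\bx$ is close enough to $\bx_0$. Translation invariance then gives $U(\bx)=U(\bx_0+\bh+t\bv)=U(\bx_0+\bh)=\widetilde U(\bh)\ge U(\bx_0)$, which is precisely the assertion that $U$ has a local minimum at $\bx_0$ (necessarily non-strict, as $U$ is constant along the $\bv$-direction).

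The only genuinely delicate point is conceptual rather than computational: recognizing that the degeneracy of the Hessian is harmless because its kernel is aligned exactly with the symmetry direction, so that quotienting by $\RR\bv$ restores a strictly convex picture. Once this reduction is in place, the remaining verifications---that $\widetilde U$ is $C^2$, that its gradient and Hessian at $0$ are the expected restrictions, and that the orthogonal projection keeps $\bh$ small---are routine. I would also take care to invoke the \emph{exact} equality in the invariance hypothesis, since it is this that makes $U(\bx)=\widetilde U(\bh)$ hold identically rather than only to leading order, which is what allows the argument to conclude a genuine local minimum.
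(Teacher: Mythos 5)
Your proof is correct and is essentially the paper's own argument written out in full: the authors' one-line proof (``change variables so that $\bv$ is in the direction of a coordinate vector'') is exactly your decomposition $\RR^N=\RR\bv\oplus(\RR\bv)^\perp$ followed by the standard second-derivative test on the complement and the lift via translation invariance. Nothing further is needed.
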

\begin{proof}
  Simply change variables so that $\bv$ is in the direction of a coordinate vector.
\end{proof}
Note that, since
\[
D^2U(\bx)\cdot \bv=0\,,
\]
we can replace~\eqref{D2gx0} by the apparently stronger condition
\[
\big(\bh,D^2U(\bx_0)\cdot \bh\big)>0\,,\qquad \all \bh\notin\RR \bv\,.
\]
\begin{lem}\label{lem.minU}
  Let $U$ and $\Om$ be as in the previous lemma. Suppose, moreover, that $\Om$ is convex and
  \[
  \big(\bh,D^2U(\bx)\cdot \bh\big)>0\,,\qquad \all \bx\in\Om\,,\quad\all \bh\in(\RR \bv)^\perp\,,\en
  \bh\ne0\,.
  \]
  Then $U$ has at most one critical point in $\Om$ modulo translations along the vector $\bv$, and
  this critical point (if it exists at all) is necessarily a global minimum.
\end{lem}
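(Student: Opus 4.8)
The plan is to exploit convexity of $\Om$ together with the strict convexity of $U$ in the directions transverse to $\bv$. First I would set up the geometry: fix a critical point $\bx_0\in\Om$ (assuming one exists; otherwise there is nothing to prove), and let $\by\in\Om$ be any other critical point. Since $\Om$ is convex, the whole segment $\{\bx_t=\bx_0+t(\by-\bx_0):t\in[0,1]\}$ lies in $\Om$, so I may legitimately study the restriction $\phi(t)=U(\bx_t)$, which is of class $C^2$ on $[0,1]$. The idea is that $\phi$ is convex, and this will force $\by$ to coincide with $\bx_0$ up to a translation along $\bv$.

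The key computation is $\phi''(t)=\big(\by-\bx_0,\,D^2U(\bx_t)\cdot(\by-\bx_0)\big)$. I would split into two cases according to whether the displacement $\bw\equiv\by-\bx_0$ is parallel to $\bv$ or not. If $\bw\in\RR\bv$, then $\by$ is just a translate of $\bx_0$ along $\bv$, which is exactly the conclusion we want (the two critical points are identified modulo the translation). If $\bw\notin\RR\bv$, I would use the hypothesis on the Hessian. Strictly speaking the hypothesis is stated for $\bh\in(\RR\bv)^\perp$, but by the remark following the previous lemma (together with $D^2U(\bx)\cdot\bv=0$) the quadratic form is in fact strictly positive on \emph{all} $\bh\notin\RR\bv$; hence $\phi''(t)>0$ for every $t\in[0,1]$, so $\phi$ is strictly convex on the segment. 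Because both endpoints $\bx_0$ and $\by$ are critical points of $U$, and $\bw$ is a feasible direction at each (as $\bx_t\in\Om$ for $t$ slightly beyond $0$ and $1$ by openness), the directional derivatives satisfy $\phi'(0)=\big(\nabla U(\bx_0),\bw\big)=0$ and $\phi'(1)=\big(\nabla U(\by),\bw\big)=0$. A strictly convex function cannot have vanishing derivative at two distinct points, giving a contradiction. This rules out the case $\bw\notin\RR\bv$ and establishes uniqueness modulo translations along $\bv$.

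It remains to show that such a critical point is necessarily a \emph{global} minimum. Here I would argue again by strict convexity along segments: for an arbitrary $\by\in\Om$ with $\by-\bx_0\notin\RR\bv$, the function $\phi$ above is strictly convex with $\phi'(0)=0$, whence $\phi(1)>\phi(0)$, i.e.\ $U(\by)>U(\bx_0)$; and if $\by-\bx_0\in\RR\bv$ then $U(\by)=U(\bx_0)$ by translation invariance. Thus $\bx_0$ realizes the global minimum of $U$ on $\Om$. The main obstacle I anticipate is purely bookkeeping: one must take care that the relevant directional derivatives of $U$ genuinely vanish at the endpoints---this uses that $\bx_0,\by$ are interior critical points of $U$ on the open set $\Om$, not merely constrained minima along the segment---and that the strengthened positivity of the Hessian (on all $\bh\notin\RR\bv$, invoked from the earlier remark) is applied correctly. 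Once these points are handled, the convexity argument closes cleanly.
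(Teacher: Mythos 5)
Your proposal is correct and follows essentially the same route as the paper: restrict $U$ to the segment joining two critical points (legitimate by convexity of $\Om$), use that the Hessian is strictly positive on all $\bh\notin\RR\bv$ to get $\vp''>0$ along the segment, and conclude both uniqueness modulo $\RR\bv$ and the global-minimum property. The only cosmetic difference is that for the global-minimum part the paper argues by contradiction via the mean value theorem, whereas you invoke strict convexity of $\vp$ directly; the ingredients are identical.
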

\begin{proof}
  Note, first of all, that the condition on the Hessian of $U$ is equivalent to
  \begin{equation}
    \label{D2gx}
    \big(\bh,D^2U(\bx)\cdot \bh\big)>0\,,\qquad \all \bx\in\Om\,,\quad\all \bh\notin\RR \bv\,.
  \end{equation}
  Suppose, to begin with, that $\bx_0,\by_0$ are critical points of $U$ in $\Om$ such that $\by_0-\bx_0$
  is not proportional to $\bv$. Since $\Om$ is convex, the segment $t\by_0+(1-t)\bx_0$ ($0\le t\le 1$)
  lies in $\Om$, so that the function
  \[
  \vp(t)=U\big(t\by_0+(1-t)\bx_0\big)
  \]
  has two critical points at $t=0,1$. By Rolle's theorem, there is a point $t_0\in(0,1)$ such that
  \[
  0=\vp''(t_0)=\big(\by_0-\bx_0,D^2U(t_0\by_0+(1-t_0)\bx_0)\cdot(\by_0-\bx_0)\big)\,.
  \]
  This contradicts condition~\eqref{D2gx}, since $\by_0-\bx_0$ is not proportional to $\bv$. Thus $U$
  has at most one critical point in $\Om$, modulo translations along $\bv$.

  Let now $\bx_0\in\Om$ be a critical point of $U$, which must be a local minimum on account of
  the previous lemma. To prove that $\bx_0$ is actually a global minimum, suppose that
  $U(\by_0)\le U(\bx_0)$ for some $\by_0\in\Om$ such that $\by_0-\bx_0$ is not proportional to
  $\bv$. The function $\vp(t)$ defined above now satisfies
  \[
  \vp(0)\ge\vp(1)\,,\qquad \vp'(0)=0\,,\qquad
  \vp''(0)=\big(\by_0-\bx_0,D^2U(\bx_0)\cdot(\by_0-\bx_0)\big)>0\,.
  \]
  Since $\vp$ is not constant on $[0,1]$ (indeed, $\vp''(0)>0$), the equality
  \[
  \vp(1)-\vp(0)=\int_0^1\vp'(s)\diff s
  \]
  and the continuity of $\vp'$ implies that there is a point $t_0\in(0,1)$ such that
  $\vp'(t_0)<0$. By the mean value theorem, this in turn implies that there is a second point
  $t_1\in(0,t_0)$ such that
  \[
  0>\vp''(t_1)=\big(\by_0-\bx_0,D^2U(t_1\by_0+(1-t_1)\bx_0)\cdot(\by_0-\bx_0)\big)\,.
  \]
  This again contradicts the hypothesis, since $\by_0-\bx_0\notin\RR\bv$.
\end{proof}

\section{Evaluation of two sums involving the Weierstrass elliptic function}
\label{app.sums}

In this appendix we shall evaluate the two sums
\begin{equation}
  \label{Spp}
  S_p'=\sum_{j=1}^{N-1}\wp_N^p(j)\,,\qquad p=1,2\,,
\end{equation}
which are used in Section~\ref{sec.LD} to compute the mean and the standard deviation of the
spectrum of the Inozemtsev spin chain~\eqref{cH}. In the latter equation $\wp_N$ denotes the
Weierstrass function with periods $N$ and $2\om_3$, where $\om_3\in\iu\,\RR_+$.

For the $p=1$ case, let $\ga$ denote the (positively oriented) perimeter of the rectangle with
vertices $\pm\om_3$ and $N\pm\om_3$, with a semicircular indentation of radius less than
$\min(1,|\om_3|)$ to the left of the points $0$ and $N$ (see Fig.~\ref{fig.contour-new}).
\begin{figure}[b]
  \centering
  \includegraphics{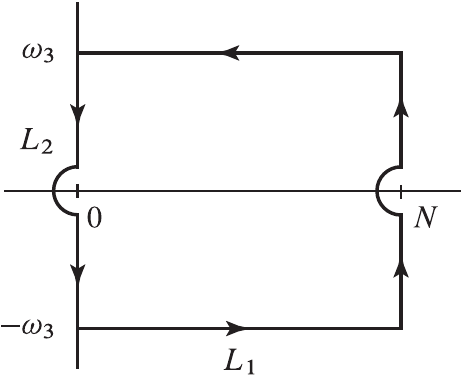}
  \caption{Contour of integration used in the computation of the sum $S_1'$.}
  \label{fig.contour-new}
\end{figure}%
The
function
\[
f(z)=\wp_N(z)\ze_1(z)\,,
\]
where $\ze_1$ denotes the Weierstrass zeta function with periods $1$ and $2\om_3$, is meromorphic
in the complex plane with poles on the lattice $j+2l\om_3$, $j,l\in\ZZ$. The residues of $f$ at
these points can be easily computed by taking into account Eq.~\eqref{Laurentwp} and the
well-known Laurent series
\begin{equation}\label{Laurentzeta}
  \ze_1(z)=\frac1z+[z^3]\,.
\end{equation}
Indeed
\[
\wp_N(z)\ze_1(z)=\frac1{z^3}+[z]\en\implies\en \res(f,0)=0\,,
\]
while at $z=j\in\{1,\dots,N-1\}$ the analyticity of $\wp_N$ and the quasi-periodicity of $\ze_1$
imply
\[
\wp_N(z)\ze_1(z)=\wp_N(z)\big(\ze_1(z-j)+2j\,\ze_1(1/2)\big)\en\implies\en \res(f,j)=\wp_N(j)\,,
\qquad j=1,\dots,N-1\,.
\]
Applying Cauchy's residue theorem we obtain
\begin{equation}\label{CRT}
\frac1{2\pi\iu}\int_\ga f(z)\,\diff z=\sum_{j=1}^{N-1}\wp_N(j)\equiv S_1'\,.
\end{equation}
The integral in the LHS can be readily computed taking into account the (quasi-)periodicity
properties of the Weierstrass functions. Indeed, denoting by $L_1$ and $L_2$ the bottom and left
sides of the contour $\ga$, by Eqs.~\eqref{quasi-p} we have
\begin{align*}
\int_\ga f(z)\,\diff z &= \int_{L_1}\wp_N(z)\,\big(\ze_1(z)-\ze_1(z+2\om_3)\big)\,\diff z
+\int_{L_2}\wp_N(z)\,\big(\ze_1(z)-\ze_1(z+N)\big)\,\diff z\\
&=-2\ze_1(\om_3)\int_{L_1}\wp_N(z)\,\diff z-2N\ze_1(1/2)\int_{L_2}\wp_N(z)\,\diff z\\
&=2\ze_1(\om_3)\,\ze_N(z)\,\Big|_{-\om_3}^{N-\om_3}+2N\ze_1(1/2)\,\ze_N(z)\,\Big|_{\om_3}^{-\om_3}
=4\big[\ze_1(\om_3)\ze_N(N/2)-N\ze_1(1/2)\ze_N(\om_3)\big]\,.
\end{align*}
{}From Legendre's relation~\eqref{Legendre} applied to both $\ze_1$ and $\ze_N$ we
have
\begin{align*}
4\om_3\big[\ze_1(\om_3)\ze_N(N/2)-N\ze_1(1/2)\ze_N(\om_3)\big]
&= 4\ze_1(\om_3)\bigg(\frac N2\,\ze_N(\om_3)+\frac{\iu\pi}2\bigg)
-4N\ze_N(\om_3)\bigg(\frac 12\,\ze_1(\om_3)+\frac{\iu\pi}2\bigg)\\
&=2\pi\iu\,\big(\ze_1(\om_3)-N\ze_N(\om_3)\big)\,.
\end{align*}
Using this identity in Eq.~\eqref{CRT} we finally obtain the formula
\[
S_1' = \frac1{\om_3}\big(\ze_1(\om_3)-N\ze_N(\om_3)\big)\,.
\]
With the help of Legendre's identity, we can rewrite this equation in the equivalent form
\begin{equation}
  \label{S1p}
  S_1' = 2\ms\big(\ze_1(1/2)-\ze_N(N/2)\big)\,.
\end{equation}

In order to evaluate the sum~\eqref{Spp} for $p=2$, consider to begin with the function
\[
g(z)=\sum_{j=0}^{N-1}\wp_N(z+j)\,.
\]
This function is easily seen to have periods $1$ and $2\om_3$, and is analytic everywhere except
for double poles at the points $j+2l\om_3$ ($j,l\in\ZZ$) on its fundamental period lattice.
Moreover, by Eq.~\eqref{Laurentwp} the principal part of $g$ at these poles is $(z-j)^{-2}$. Since
the Weierstrass function $\wp_1(z)$ with periods $1$ and $2\om_3$ has the same poles and principal
parts as $g$, by Liouville's theorem we have
\[
\sum_{j=0}^{N-1}\wp_N(z+j)=\wp_1(z)+c
\]
(where the constant $c$ is in fact the sum $S_1'$). Differentiating this equality twice and taking
into account the identity
\[
\wp''=6\wp^2-\frac12\,g_2
\]
we immediately obtain
\begin{equation}\label{wp2z+j}
\sum_{j=1}^{N-1}\wp_N^2(z+j)=\wp_1^2(z)-\wp_N^2(z)+\frac{N}{12}\,g_2(N/2,\om_3)-\frac1{12}\,g_2(1/2,\om_3)\,.
\end{equation}
The limit as $z\to0$ of the LHS is the sum $S_2'$. As to the RHS, using the Laurent
series~\eqref{Laurentwp} we readily find
\[
\lim_{z\to0}\big[\wp_1^2(z)-\wp_N^2(z)\big]=\frac1{10}\,\big[g_2(1/2,\om_3)-g_2(N/2,\om_3)\big]\,.
\]
{}From Eq.~\eqref{wp2z+j} we thus obtain
\begin{equation}\label{wp2sum}
  \sum_{j=1}^{N-1}\wp_N^2(j)=\frac1{12}\,\bigg(N-\frac65\bigg)\,g_2(N/2,\om_3)+\frac1{60}\,g_2(1/2,\om_3)\,.
\end{equation}

Proceeding in the same fashion, it is straightforward (albeit lengthy, unless $p$ is small) to
compute the sum $S_p'$ for any fixed value of $p$. For arbitrary $p$, it is not hard to show that
\begin{equation}
  \label{Sppasymp}
  S_p'=\frac{(2\pi)^{2p}|B_{2p}|}{(2p)!}+O(N^{-1})\,,
\end{equation}
where $B_n$ denotes the $n$-th Bernoulli number. The proof is based on the absolute convergence in
the punctured disk $0<|z|<\min(1,\tau)$ of the Laurent series~\cite{SS03}
\begin{equation}\label{wpEis}
  \wp(z;1/2,\iu\tau/2)\equiv\wp(z)=\frac1{z^2}+\sum_{l=1}^\infty(2l+1)E_{2l+2}(\tau)\,z^{2l}\,,
\end{equation}
where $E_{j}$ is the Eisenstein series
\[
E_{j}(\tau)=\sum_{(l,n)\in\ZZ^2\setminus\{0\}}(l+\iu\ms n\tau)^{-j}\,,\qquad j>2\,.
\]
Indeed, suppose first that $k^2\le1/2$, so that $\tau\equiv K'/K\ge1$. We then have
\[
\wp_N(j)\equiv\wp(j;N/2,\iu N\tau/2)=N^{-2}\wp(j/N)=
\frac1{j^2}+\frac1{N^2}\sum_{l=1}^\infty(2l+1)E_{2l+2}(\tau)\,(j/N)^{2l}\,, \qquad
1\le j\le N-1\,,
\]
since in this case the series in the RHS of Eq.~\eqref{wpEis} converges inside the unit disk.
Moreover, from the absolute convergence of the latter series in its disk of convergence it follows
that
\[
\bigg|\wp_N(j)-\frac1{j^2}\bigg|\le\frac1{N^2}\sum_{l=1}^\infty(2l+1)\big|E_{2l+2}(\tau)\big|\,(1/2)^{2l}
=O(N^{-2})\,,\qquad 1\le j\le [N/2]\,,
\]
and therefore
\[
\wp_N^p(j)=\frac1{j^{2p}}+O(N^{-2})\,,\qquad 1\le j\le [N/2]\,.
\]
Summing over $j$ and taking into account that $\wp_N(j)=\wp_N(N-j)$ we easily
obtain
\begin{align*}
S_p'&=2\sum_{j=1}^{[N/2]}\wp_N^p(j)+(\pi(N)-1)\,\wp_N^p(N/2)=
2\sum_{j=1}^{[N/2]}\frac1{j^{2p}}+O(N^{-1})+\big(\pi(N)-1\big)\,N^{-2p}\wp^p(1/2)\\
&=2\ze(2p)+O(N^{-1})\,,
\end{align*}
where $\ze$ denotes Riemann's zeta function\footnote{Recall that if $\la>0$ we have
  $\ds\sum_{j=\la N+1}^\infty j^{-2p}<\int_{\la N}^\infty x^{-2p}\,\diff x=O(N^{1-2p})$.}. This is
essentially Eq.~\eqref{Sppasymp}.

Suppose, on the other hand, that $k^2>1/2$, so that $\tau=K'/K<1$. In this case we can write
\[
S_p'=2\sum_{j=1}^{[N\tau/2]}\wp_N^p(j)+(\pi(N)-1)N^{-2p}\wp^p(1/2)+R_p\,,
\]
with
\[
R_p=2\sum_{j=[N\tau/2]+1}^{[N/2]}\wp_N^p(j)\le\big(N(1-\tau)+2\big)\,\wp_N^p(N\tau/2)
=\big(N(1-\tau)+2\big)\,N^{-2p}\wp^p(\tau/2)=O(N^{1-2p})\,.
\]
On the other hand, for $j=1,\dots,[N\tau/2]$ we again have
\[
\bigg|\wp_N(j)-\frac1{j^2}\bigg|\le\frac1{N^2}\sum_{l=1}^\infty(2l+1)\big|E_{2l+2}(\tau)\big|\,(\tau/2)^{2l}=O(N^{-2})\,,
\]
on account of the absolute convergence of the power series in the RHS of Eq.~\eqref{wpEis} inside
its disk of convergence $|z|<\tau$. Thus in this case
\[
\wp_N^p(j)=\frac1{j^{2p}}+O(N^{-2})\,,\qquad 1\le j\le [N\tau/2]\,,
\]
from which Eq.~\eqref{Sppasymp} easily follows as before.

\section{Simplification of several sums appearing in Eqs.~\eqref{ga1} and~\eqref{ga2}}

In this appendix we shall simplify several sums appearing in Eqs.~\eqref{ga1}-\eqref{ga2} for the
skewness and kurtosis of the spectrum of a spin chain of the form~\eqref{hchain}. Consider, to
begin, with, the last sum in Eq.~\eqref{ga1}, which can be written as
\[
\sump{i,j,l} h_{ij}h_{jl}h_{li}=6\sum_{i<j<l}h(j-i)h(l-j)h(l-i)
=6\sum_{2\le a+b\le N-1}(N-a-b)\,h(a)\ms h(b)\ms h(a+b)\,.
\]
Performing the change of index $a\mapsto c=N-a-b$ in the latter sum and making use of the
identities~\eqref{hids} satisfied by the function $h$ we obtain
\[
\sum_{2\le a+b\le N-1}(N-a-b)\,h(a)\ms h(b)\ms h(a+b)
=\sum_{2\le b+c\le N-1}c\,h(b)\ms h(c)\ms h(b+c)\,,
\]
and therefore
\[
\sum_{2\le a+b\le N-1}a\,h(a)\ms h(b)\ms h(a+b)=\frac N3\,\sum_{2\le a+b\le N-1}h(a)\ms h(b)\ms
h(a+b)\,.
\]
We thus have
\begin{equation}\label{sumskew}
\sump{i,j,l} h_{ij}h_{jl}h_{li}=2N\sum_{2\le a+b\le N-1}h(a)\ms
h(b)\ms h(a+b)=2N\sum_{1\le i<j\le N-1}h(i)\ms
h(j)\ms h(j-i)\,.
\end{equation}
Consider next the second sum in Eq.~\eqref{ga2}:
\begin{align*}
  \sump{i,j,l} h_{ij}^2h_{jl}^2&=2\sum_{i<j<l}\big[h^2(j-i)\ms h^2(l-j)
  +h^2(j-i)\ms h^2(l-i)+h^2(l-j)\ms h^2(l-i)\big]\\
  &=2\sum_{2\le a+b\le N-1}(N-a-b)\,\big[h^2(a)\ms h^2(b)+h^2(a)\ms h^2(a+b)+h^2(b)\ms h^2(a+b)\big]\\
  &=2\sum_{2\le a+b\le N-1}a\,\big[h^2(a)\ms h^2(b)+h^2(a)\ms h^2(a+b)+h^2(b)\ms h^2(a+b)\big]\,.
\end{align*}
Hence
\begin{align*}
\sum_{2\le a+b\le N-1}(N-a-b)\,&\big[h^2(a)\ms h^2(b)+h^2(a)\ms h^2(a+b)+h^2(b)\ms h^2(a+b)\big]\\&=
\frac N3\,\sum_{2\le a+b\le N-1}\big[h^2(a)\ms h^2(b)+2h^2(a)\ms h^2(a+b)\big]\\
&=\frac N6\,\sum_{\substack{1\le a,b\le N-1\\ a+b\ne N}}h^2(a)\ms
h^2(b)+\frac N3\,\sum_{\substack{1\le a,b\le N-1\\ a\ne b}}h^2(a)\ms h^2(b)=\frac N2\,(S_2^2-S_4)
\,,
\end{align*}
and therefore
\begin{equation}
  \sump{i,j,l} h_{ij}^2h_{jl}^2=N(S_2^2-S_4)\,.
  \label{sum2}
\end{equation}
The third sum in Eq.~\eqref{ga2} can be dealt with similarly:
\begin{align}
  \sump{i,j,l} h_{ij}^2h_{jl}h_{li}&=2\sum_{i<j<l}\big[
  h^2(j-i)\ms h(l-j)\ms h(l-i)+h^2(l-i)\ms h(j-i)\ms h(l-j)+h^2(l-j)\ms h(j-i)\ms h(l-i)\big]\nonumber\\
  &=2\sum_{1\le a+b\le N-1}(N-a-b)\big[h^2(a)\ms h(b)\ms h(a+b)+h^2(a+b)\ms h(a)\ms h(b)+h^2(b)\ms h(a)\ms h(a+b)\big]\nonumber\\
  &=\frac{2N}3\,\sum_{1\le a+b\le N-1}\big[h^2(a+b)\ms h(a)\ms h(b)+2h^2(a)\ms h(b)\ms
  h(a+b)\big]\nonumber\\
  &=\frac{2N}3\,\sum_{1\le i<j\le N-1}\big[h^2(j)\ms h(i)\ms h(j-i)+2h^2(i)\ms h(j)\ms
  h(j-i)\big]\nonumber\\
  &=2N\sum_{1\le i<j\le N-1}h^2(i)\ms h(j)\ms h(j-i)
  =2N\sum_{1\le i<j\le N-1}h^2(j)\ms h(i)\ms h(j-i)\,,
  \label{sum3}
\end{align}
where in the last two steps we have made use of the identity $h(l)=h(N-l)$. Likewise, the last
(cyclic) sum in Eq.~\eqref{ga2} can be written as
\begin{align*}
\sump{i,j,l,n} h_{ij}h_{jl}h_{ln}h_{ni}&=8
\sum_{i<j<l<n}\big[h(j-i)\ms h(l-j)\ms h(n-l)\ms h(n-i)
\nonumber\\
&\hphantom{\sum_{i<j<l<n}\big[}
+h(j-i)\ms h(l-i)\ms h(n-j)\ms h(n-l)+h(l-i)\ms h(l-j)\ms h(n-i)\ms h(n-j)\big]\,,
\end{align*}
where each of the sums in the RHS can be easily simplified. Indeed, for the first of these
sums we have
\begin{align*}
  \sum_{i<j<l<n}h(j-i)\ms h(l-j)\ms h(n-l)\ms h(n-i)&=\sum_{3\le a+b+c\le N-1}(N-a-b-c)\ms
  h(a)\ms h(b)\ms h(c)\ms h(a+b+c)\nonumber\\ &=\frac N4\sum_{3\le a+b+c\le N-1}
  h(a)\ms h(b)\ms h(c)\ms h(a+b+c)\,,
\end{align*}
and similarly for the remaining two sums:
\begin{align*}
  &\sum_{i<j<l<n}\big[h(j-i)\ms h(l-i)\ms h(n-j)\ms h(n-l)+h(l-i)\ms h(l-j)\ms h(n-i)\ms h(n-j)
  \big]\nonumber\\&\hspace*{1.5em}=\sum_{3\le a+b+c\le N-1}(N-a-b-c)\ms h(a)\ms
  h(a+c)\ms\big[h(b)\ms h(b+c)+ h(a+b)\ms h(a+b+c)\big]\nonumber\\ &\hspace*{1.5em}=\frac
  N4\sum_{3\le a+b+c\le N-1}\ms h(a)\ms h(a+c)\ms\big[h(b)\ms h(b+c)+
  h(a+b)\ms h(a+b+c)\big]\\
  &\hspace*{1.5em}=\frac{N}2\sum_{3\le a+b+c\le N-1}\ms h(a)\ms h(b)\ms h(a+c)\ms h(b+c)\,,
\end{align*}
where the last equality follows from the change of index $b\mapsto b'=N-a-b-c$ in the second
sum. We thus finally obtain
\begin{equation}
  \sump{i,j,l,n} h_{ij}h_{jl}h_{ln}h_{ni}=2N\sum_{3\le a+b+c\le N-1}h(a)\ms h(b)\big[h(c)\ms h(a+b+c)
  +2h(a+c)h(b+c)\big]\,.
  \label{sum45}
\end{equation}

% \bibliographystyle{model1a-num-names}
% %
% \bibliography{cmprefs}

\begin{thebibliography}{63}
\expandafter\ifx\csname natexlab\endcsname\relax\def\natexlab#1{#1}\fi
\providecommand{\bibinfo}[2]{#2}
\ifx\xfnm\relax \def\xfnm[#1]{\unskip,\space#1}\fi
%Type = Article
\bibitem[{Haldane(1988)}]{Ha88}
\bibinfo{author}{F.~D.~M. Haldane}, \bibinfo{journal}{Phys. Rev. Lett.}
  \bibinfo{volume}{60} (\bibinfo{year}{1988}) \bibinfo{pages}{635--638}.
%Type = Article
\bibitem[{Shastry(1988)}]{Sh88}
\bibinfo{author}{B.~S. Shastry}, \bibinfo{journal}{Phys. Rev. Lett.}
  \bibinfo{volume}{60} (\bibinfo{year}{1988}) \bibinfo{pages}{639--642}.
%Type = Article
\bibitem[{Fowler and Minahan(1993)}]{FM93}
\bibinfo{author}{M.~Fowler}, \bibinfo{author}{J.~A. Minahan},
  \bibinfo{journal}{Phys. Rev. Lett.} \bibinfo{volume}{70}
  (\bibinfo{year}{1993}) \bibinfo{pages}{2325--2328}.
%Type = Article
\bibitem[{Haldane et~al.(1992)Haldane, Ha, Talstra, Bernard, and
  Pasquier}]{HHTBP92}
\bibinfo{author}{F.~D.~M. Haldane}, \bibinfo{author}{Z.~N.~C. Ha},
  \bibinfo{author}{J.~C. Talstra}, \bibinfo{author}{D.~Bernard},
  \bibinfo{author}{V.~Pasquier}, \bibinfo{journal}{Phys. Rev. Lett.}
  \bibinfo{volume}{69} (\bibinfo{year}{1992}) \bibinfo{pages}{2021--2025}.
%Type = Article
\bibitem[{Finkel and Gonz{\'a}lez-L\'opez(2005)}]{FG05}
\bibinfo{author}{F.~Finkel}, \bibinfo{author}{A.~Gonz{\'a}lez-L\'opez},
  \bibinfo{journal}{Phys. Rev. B} \bibinfo{volume}{72} (\bibinfo{year}{2005})
  \bibinfo{pages}{174411(6)}.
%Type = Article
\bibitem[{Basu-Mallick et~al.(2010)Basu-Mallick, Bondyopadhaya, and
  Hikami}]{BBH10}
\bibinfo{author}{B.~Basu-Mallick}, \bibinfo{author}{N.~Bondyopadhaya},
  \bibinfo{author}{K.~Hikami}, \bibinfo{journal}{SIGMA} \bibinfo{volume}{6}
  (\bibinfo{year}{2010}) \bibinfo{pages}{091(13)}.
%Type = Article
\bibitem[{Ha and Haldane(1992)}]{HH92}
\bibinfo{author}{Z.~N.~C. Ha}, \bibinfo{author}{F.~D.~M. Haldane},
  \bibinfo{journal}{Phys. Rev. B} \bibinfo{volume}{46} (\bibinfo{year}{1992})
  \bibinfo{pages}{9359--9368}.
%Type = Article
\bibitem[{Polychronakos(1993)}]{Po93}
\bibinfo{author}{A.~P. Polychronakos}, \bibinfo{journal}{Phys. Rev. Lett.}
  \bibinfo{volume}{70} (\bibinfo{year}{1993}) \bibinfo{pages}{2329--2331}.
%Type = Article
\bibitem[{Sutherland(1971)}]{Su71}
\bibinfo{author}{B.~Sutherland}, \bibinfo{journal}{Phys. Rev. A}
  \bibinfo{volume}{4} (\bibinfo{year}{1971}) \bibinfo{pages}{2019--2021}.
%Type = Article
\bibitem[{Sutherland(1972)}]{Su72}
\bibinfo{author}{B.~Sutherland}, \bibinfo{journal}{Phys. Rev. A}
  \bibinfo{volume}{5} (\bibinfo{year}{1972}) \bibinfo{pages}{1372--1376}.
%Type = Article
\bibitem[{Calogero(1971)}]{Ca71}
\bibinfo{author}{F.~Calogero}, \bibinfo{journal}{J. Math. Phys.}
  \bibinfo{volume}{12} (\bibinfo{year}{1971}) \bibinfo{pages}{419--436}.
%Type = Article
\bibitem[{Minahan and Polychronakos(1993)}]{MP93}
\bibinfo{author}{J.~A. Minahan}, \bibinfo{author}{A.~P. Polychronakos},
  \bibinfo{journal}{Phys. Lett. B} \bibinfo{volume}{302} (\bibinfo{year}{1993})
  \bibinfo{pages}{265--270}.
%Type = Article
\bibitem[{Inozemtsev and Meshcheryakov(1986)}]{IM86}
\bibinfo{author}{V.~I. Inozemtsev}, \bibinfo{author}{D.~V. Meshcheryakov},
  \bibinfo{journal}{Phys. Scr.} \bibinfo{volume}{33} (\bibinfo{year}{1986})
  \bibinfo{pages}{99--104}.
%Type = Article
\bibitem[{Inozemtsev(1996)}]{In96}
\bibinfo{author}{V.~I. Inozemtsev}, \bibinfo{journal}{Phys. Scr.}
  \bibinfo{volume}{53} (\bibinfo{year}{1996}) \bibinfo{pages}{516--520}.
%Type = Article
\bibitem[{Frahm(1993)}]{Fr93}
\bibinfo{author}{H.~Frahm}, \bibinfo{journal}{J. Phys. A: Math. Gen.}
  \bibinfo{volume}{26} (\bibinfo{year}{1993}) \bibinfo{pages}{L473--L479}.
%Type = Article
\bibitem[{Frahm and Inozemtsev(1994)}]{FI94}
\bibinfo{author}{H.~Frahm}, \bibinfo{author}{V.~I. Inozemtsev},
  \bibinfo{journal}{J. Phys. A: Math. Gen.} \bibinfo{volume}{27}
  (\bibinfo{year}{1994}) \bibinfo{pages}{L801--L807}.
%Type = Article
\bibitem[{Olshanetsky and Perelomov(1983)}]{OP83}
\bibinfo{author}{M.~A. Olshanetsky}, \bibinfo{author}{A.~M. Perelomov},
  \bibinfo{journal}{Phys. Rep.} \bibinfo{volume}{94} (\bibinfo{year}{1983})
  \bibinfo{pages}{313--404}.
%Type = Article
\bibitem[{Dittrich and Inozemtsev(1993)}]{Di93}
\bibinfo{author}{J.~Dittrich}, \bibinfo{author}{V.~I. Inozemtsev},
  \bibinfo{journal}{J. Phys. A: Math. Gen.} \bibinfo{volume}{26}
  (\bibinfo{year}{1993}) \bibinfo{pages}{L753--L756}.
%Type = Article
\bibitem[{Felder and Varchenko(1997)}]{FV97}
\bibinfo{author}{G.~Felder}, \bibinfo{author}{A.~Varchenko},
  \bibinfo{journal}{Comp. Math.} \bibinfo{volume}{107} (\bibinfo{year}{1997})
  \bibinfo{pages}{143--175}.
%Type = Article
\bibitem[{Takemura(2000)}]{Ta00}
\bibinfo{author}{K.~Takemura}, \bibinfo{journal}{Lett. Math. Phys.}
  \bibinfo{volume}{53} (\bibinfo{year}{2000}) \bibinfo{pages}{181--194}.
%Type = Article
\bibitem[{Komori and Takemura(2002)}]{KT02}
\bibinfo{author}{Y.~Komori}, \bibinfo{author}{K.~Takemura},
  \bibinfo{journal}{Commun. Math. Phys.} \bibinfo{volume}{227}
  (\bibinfo{year}{2002}) \bibinfo{pages}{93--118}.
%Type = Article
\bibitem[{Langmann(2010)}]{La10}
\bibinfo{author}{E.~Langmann}, \bibinfo{journal}{Lett. Math. Phys.}
  \bibinfo{volume}{94} (\bibinfo{year}{2010}) \bibinfo{pages}{63--75}.
%Type = Article
\bibitem[{Dittrich and Inozemtsev(2009)}]{DI09}
\bibinfo{author}{J.~Dittrich}, \bibinfo{author}{V.~I. Inozemtsev},
  \bibinfo{journal}{Regul. Chaotic Dyn.} \bibinfo{volume}{14}
  (\bibinfo{year}{2009}) \bibinfo{pages}{218--222}.
%Type = Article
\bibitem[{Barba and Inozemtsev(2008)}]{BI08}
\bibinfo{author}{J.~C. Barba}, \bibinfo{author}{V.~I. Inozemtsev},
  \bibinfo{journal}{Phys. Lett. A} \bibinfo{volume}{372} (\bibinfo{year}{2008})
  \bibinfo{pages}{5951}.
%Type = Article
\bibitem[{Inozemtsev(1990)}]{In90}
\bibinfo{author}{V.~I. Inozemtsev}, \bibinfo{journal}{J. Stat. Phys.}
  \bibinfo{volume}{59} (\bibinfo{year}{1990}) \bibinfo{pages}{1143--1155}.
%Type = Article
\bibitem[{Serban and Staudacher(2004)}]{SS04}
\bibinfo{author}{D.~Serban}, \bibinfo{author}{M.~Staudacher},
  \bibinfo{journal}{JHEP} \bibinfo{volume}{06} (\bibinfo{year}{2004})
  \bibinfo{pages}{001(31)}.
%Type = Article
\bibitem[{Bargheer et~al.(2009)Bargheer, Beisert, and Loebbert}]{BBL09}
\bibinfo{author}{T.~Bargheer}, \bibinfo{author}{N.~Beisert},
  \bibinfo{author}{F.~Loebbert}, \bibinfo{journal}{J. Phys. A: Math. Theor.}
  \bibinfo{volume}{42} (\bibinfo{year}{2009}) \bibinfo{pages}{285205(58)}.
%Type = Article
\bibitem[{Rej(2012)}]{Re12}
\bibinfo{author}{A.~Rej}, \bibinfo{journal}{Lett. Math. Phys.}
  \bibinfo{volume}{99} (\bibinfo{year}{2012}) \bibinfo{pages}{85--102}.
%Type = Article
\bibitem[{Serban(2013)}]{Se13}
\bibinfo{author}{D.~Serban}, \bibinfo{journal}{JHEP} \bibinfo{volume}{01}
  (\bibinfo{year}{2013}) \bibinfo{pages}{012(16)}.
%Type = Article
\bibitem[{Inozemtsev(1996)}]{In96b}
\bibinfo{author}{V.~I. Inozemtsev}, \bibinfo{journal}{Lett. Math. Phys.}
  \bibinfo{volume}{36} (\bibinfo{year}{1996}) \bibinfo{pages}{55--63}.
%Type = Article
\bibitem[{Inozemtsev(2003)}]{In03}
\bibinfo{author}{V.~I. Inozemtsev}, \bibinfo{journal}{Phys. Part. Nucl.}
  \bibinfo{volume}{34} (\bibinfo{year}{2003}) \bibinfo{pages}{166--193}.
%Type = Article
\bibitem[{Berry and Tabor(1977)}]{BT77}
\bibinfo{author}{M.~V. Berry}, \bibinfo{author}{M.~Tabor},
  \bibinfo{journal}{Proc. R. Soc. London Ser. A} \bibinfo{volume}{356}
  (\bibinfo{year}{1977}) \bibinfo{pages}{375--394}.
%Type = Article
\bibitem[{Bohigas et~al.(1984)Bohigas, Giannoni, and Schmit}]{BGS84}
\bibinfo{author}{O.~Bohigas}, \bibinfo{author}{M.~J. Giannoni},
  \bibinfo{author}{C.~Schmit}, \bibinfo{journal}{Phys. Rev. Lett.}
  \bibinfo{volume}{52} (\bibinfo{year}{1984}) \bibinfo{pages}{1--4}.
%Type = Article
\bibitem[{Poilblanc et~al.(1993)Poilblanc, Ziman, Bellissard, Mila, and
  Montambaux}]{PZBMM93}
\bibinfo{author}{D.~Poilblanc}, \bibinfo{author}{T.~Ziman},
  \bibinfo{author}{J.~Bellissard}, \bibinfo{author}{F.~Mila},
  \bibinfo{author}{J.~Montambaux}, \bibinfo{journal}{Europhys. Lett.}
  \bibinfo{volume}{22} (\bibinfo{year}{1993}) \bibinfo{pages}{537--542}.
%Type = Article
\bibitem[{Rela{\~n}o et~al.(2002)Rela{\~n}o, G{\'o}mez, Molina, Retamosa, and
  Faleiro}]{RGMRF02}
\bibinfo{author}{A.~Rela{\~n}o}, \bibinfo{author}{J.~M.~G. G{\'o}mez},
  \bibinfo{author}{R.~A. Molina}, \bibinfo{author}{J.~Retamosa},
  \bibinfo{author}{E.~Faleiro}, \bibinfo{journal}{Phys. Rev. Lett.}
  \bibinfo{volume}{89} (\bibinfo{year}{2002}) \bibinfo{pages}{244102(4)}.
%Type = Article
\bibitem[{Faleiro et~al.(2004)Faleiro, G{\'o}mez, Molina, Mu{\~n}oz,
  Rela{\~n}o, and Retamosa}]{FGMMRR04}
\bibinfo{author}{E.~Faleiro}, \bibinfo{author}{J.~M.~G. G{\'o}mez},
  \bibinfo{author}{R.~A. Molina}, \bibinfo{author}{L.~Mu{\~n}oz},
  \bibinfo{author}{A.~Rela{\~n}o}, \bibinfo{author}{J.~Retamosa},
  \bibinfo{journal}{Phys. Rev. Lett.} \bibinfo{volume}{93}
  (\bibinfo{year}{2004}) \bibinfo{pages}{244101(4)}.
%Type = Article
\bibitem[{Bernard(1993)}]{Be93}
\bibinfo{author}{D.~Bernard}, \bibinfo{journal}{Int. J. Mod. Phys. B}
  \bibinfo{volume}{7} (\bibinfo{year}{1993}) \bibinfo{pages}{3517--3530}.
%Type = Article
\bibitem[{Enciso et~al.(2010)Enciso, Finkel, and Gonz{\'a}lez-L\'opez}]{EFG10}
\bibinfo{author}{A.~Enciso}, \bibinfo{author}{F.~Finkel},
  \bibinfo{author}{A.~Gonz{\'a}lez-L\'opez}, \bibinfo{journal}{Phys. Rev. E}
  \bibinfo{volume}{82} (\bibinfo{year}{2010}) \bibinfo{pages}{051117(6)}.
%Type = Article
\bibitem[{Berndt and Yeap(2002)}]{BY02}
\bibinfo{author}{B.~Berndt}, \bibinfo{author}{B.~P. Yeap},
  \bibinfo{journal}{Adv. Appl. Math.} \bibinfo{volume}{29}
  (\bibinfo{year}{2002}) \bibinfo{pages}{358--385}.
%Type = Article
\bibitem[{Polychronakos(1994)}]{Po94}
\bibinfo{author}{A.~P. Polychronakos}, \bibinfo{journal}{Nucl. Phys. B}
  \bibinfo{volume}{419} (\bibinfo{year}{1994}) \bibinfo{pages}{553--566}.
%Type = Book
\bibitem[{Olver et~al.(2010)Olver, Lozier, Boisvert, and Clark}]{OLBC10}
\bibinfo{editor}{F.~W.~J. Olver}, \bibinfo{editor}{D.~W. Lozier},
  \bibinfo{editor}{R.~F. Boisvert}, \bibinfo{editor}{C.~W. Clark} (Eds.),
  \bibinfo{title}{{NIST} Handbook of Mathematical Functions},
  \bibinfo{publisher}{Cambridge Univeristy Press}, \bibinfo{year}{2010}.
%Type = Article
\bibitem[{Polychronakos(2006)}]{Po06}
\bibinfo{author}{A.~P. Polychronakos}, \bibinfo{journal}{J. Phys. A: Math.
  Gen.} \bibinfo{volume}{39} (\bibinfo{year}{2006})
  \bibinfo{pages}{12793--12845}.
%Type = Article
\bibitem[{Enciso et~al.(2005)Enciso, Finkel, Gonz{\'a}lez-L\'opez, and
  Rodr{{\'\i}}guez}]{EFGR05}
\bibinfo{author}{A.~Enciso}, \bibinfo{author}{F.~Finkel},
  \bibinfo{author}{A.~Gonz{\'a}lez-L\'opez}, \bibinfo{author}{M.~A.
  Rodr{{\'\i}}guez}, \bibinfo{journal}{Nucl. Phys. B} \bibinfo{volume}{707}
  (\bibinfo{year}{2005}) \bibinfo{pages}{553--576}.
%Type = Article
\bibitem[{Basu-Mallick and Bondyopadhaya(2006)}]{BB06}
\bibinfo{author}{B.~Basu-Mallick}, \bibinfo{author}{N.~Bondyopadhaya},
  \bibinfo{journal}{Nucl. Phys. B} \bibinfo{volume}{757} (\bibinfo{year}{2006})
  \bibinfo{pages}{280--302}.
%Type = Article
\bibitem[{Barba et~al.(2008)Barba, Finkel, Gonz\'alez-L\'opez, and
  Rodr{\'\i}guez}]{BFGR08}
\bibinfo{author}{J.~C. Barba}, \bibinfo{author}{F.~Finkel},
  \bibinfo{author}{A.~Gonz\'alez-L\'opez}, \bibinfo{author}{M.~A.
  Rodr{\'\i}guez}, \bibinfo{journal}{Phys. Rev. B} \bibinfo{volume}{77}
  (\bibinfo{year}{2008}) \bibinfo{pages}{214422(10)}.
%Type = Article
\bibitem[{Barba et~al.(2009)Barba, Finkel, Gonz\'alez-L\'opez, and
  Rodr{\'\i}guez}]{BFGR09}
\bibinfo{author}{J.~C. Barba}, \bibinfo{author}{F.~Finkel},
  \bibinfo{author}{A.~Gonz\'alez-L\'opez}, \bibinfo{author}{M.~A.
  Rodr{\'\i}guez}, \bibinfo{journal}{Nucl. Phys. B} \bibinfo{volume}{806}
  (\bibinfo{year}{2009}) \bibinfo{pages}{684--714}.
%Type = Article
\bibitem[{Basu-Mallick et~al.(2009)Basu-Mallick, Finkel, and
  Gonz{\'a}lez-L{\'o}pez}]{BFG09}
\bibinfo{author}{B.~Basu-Mallick}, \bibinfo{author}{F.~Finkel},
  \bibinfo{author}{A.~Gonz{\'a}lez-L{\'o}pez}, \bibinfo{journal}{Nucl. Phys. B}
  \bibinfo{volume}{812} (\bibinfo{year}{2009}) \bibinfo{pages}{402--423}.
%Type = Article
\bibitem[{Basu-Mallick and Bondyopadhaya(2009)}]{BB09}
\bibinfo{author}{B.~Basu-Mallick}, \bibinfo{author}{N.~Bondyopadhaya},
  \bibinfo{journal}{Phys. Lett. A} \bibinfo{volume}{373} (\bibinfo{year}{2009})
  \bibinfo{pages}{2831--2836}.
%Type = Article
\bibitem[{Basu-Mallick et~al.(2013)Basu-Mallick, Finkel, and
  Gonz{\'a}lez-L{\'o}pez}]{BFG13}
\bibinfo{author}{B.~Basu-Mallick}, \bibinfo{author}{F.~Finkel},
  \bibinfo{author}{A.~Gonz{\'a}lez-L{\'o}pez}, \bibinfo{journal}{Nucl. Phys. B}
  \bibinfo{volume}{866} (\bibinfo{year}{2013}) \bibinfo{pages}{391--413}.
%Type = Book
\bibitem[{Abramowitz and Stegun(1970)}]{AS70}
\bibinfo{author}{M.~Abramowitz}, \bibinfo{author}{I.~A. Stegun},
  \bibinfo{title}{Handbook of Mathematical Functions},
  \bibinfo{publisher}{Dover}, \bibinfo{address}{New York},
  \bibinfo{edition}{ninth} edition, \bibinfo{year}{1970}.
%Type = Book
\bibitem[{Stein and Shakarchi(2003)}]{SS03}
\bibinfo{author}{E.~M. Stein}, \bibinfo{author}{R.~Shakarchi},
  \bibinfo{title}{Complex Analysis}, \bibinfo{publisher}{Princeton University
  Press}, \bibinfo{address}{Princeton, N.J.}, \bibinfo{year}{2003}.
%Type = Book
\bibitem[{Haake(2001)}]{Ha01}
\bibinfo{author}{F.~Haake}, \bibinfo{title}{{Q}uantum {S}ignatures of {C}haos},
  \bibinfo{publisher}{Springer-Verlag}, \bibinfo{address}{Berlin},
  \bibinfo{edition}{second} edition, \bibinfo{year}{2001}.
%Type = Book
\bibitem[{Mehta(2004)}]{Me04}
\bibinfo{author}{M.~L. Mehta}, \bibinfo{title}{{R}andom {M}atrices},
  \bibinfo{publisher}{Elsevier}, \bibinfo{address}{San Diego},
  \bibinfo{edition}{3rd} edition, \bibinfo{year}{2004}.
%Type = Article
\bibitem[{d'Auriac et~al.(2002)d'Auriac, Maillard, and Viallet}]{AMV02}
\bibinfo{author}{J.-C.~A. d'Auriac}, \bibinfo{author}{J.-M. Maillard},
  \bibinfo{author}{C.~M. Viallet}, \bibinfo{journal}{J. Phys. A: Math. Gen.}
  \bibinfo{volume}{35} (\bibinfo{year}{2002}) \bibinfo{pages}{4801--4822}.
%Type = Article
\bibitem[{Barba et~al.(2008)Barba, Finkel, Gonz\'alez-L\'opez, and
  Rodr{\'\i}guez}]{BFGR08epl}
\bibinfo{author}{J.~C. Barba}, \bibinfo{author}{F.~Finkel},
  \bibinfo{author}{A.~Gonz\'alez-L\'opez}, \bibinfo{author}{M.~A.
  Rodr{\'\i}guez}, \bibinfo{journal}{Europhys. Lett.} \bibinfo{volume}{83}
  (\bibinfo{year}{2008}) \bibinfo{pages}{27005(6)}.
%Type = Article
\bibitem[{Santhanam and Bandyopadhyay(2005)}]{SB05}
\bibinfo{author}{M.~S. Santhanam}, \bibinfo{author}{J.~N. Bandyopadhyay},
  \bibinfo{journal}{Phys. Rev. Lett.} \bibinfo{volume}{95}
  (\bibinfo{year}{2005}) \bibinfo{pages}{114101(4)}.
%Type = Article
\bibitem[{Male et~al.(2007)Male, {Le Ca{\"e}r}, and Delannay}]{MCD07}
\bibinfo{author}{C.~Male}, \bibinfo{author}{G.~{Le Ca{\"e}r}},
  \bibinfo{author}{R.~Delannay}, \bibinfo{journal}{Phys. Rev. E}
  \bibinfo{volume}{76} (\bibinfo{year}{2007}) \bibinfo{pages}{042101(4)}.
%Type = Article
\bibitem[{Rela{\~n}o(2008)}]{Re08}
\bibinfo{author}{A.~Rela{\~n}o}, \bibinfo{journal}{Phys. Rev. Lett.}
  \bibinfo{volume}{100} (\bibinfo{year}{2008}) \bibinfo{pages}{224101(4)}.
%Type = Article
\bibitem[{Barba et~al.(2010)Barba, Finkel, Gonz\'alez-L\'opez, and
  Rodr{\'\i}guez}]{BFGR10}
\bibinfo{author}{J.~C. Barba}, \bibinfo{author}{F.~Finkel},
  \bibinfo{author}{A.~Gonz\'alez-L\'opez}, \bibinfo{author}{M.~A.
  Rodr{\'\i}guez}, \bibinfo{journal}{Nucl. Phys. B} \bibinfo{volume}{839}
  (\bibinfo{year}{2010}) \bibinfo{pages}{499--525}.
%Type = Article
\bibitem[{Hikami(1995)}]{Hi95npb}
\bibinfo{author}{K.~Hikami}, \bibinfo{journal}{Nucl. Phys. B}
  \bibinfo{volume}{441} (\bibinfo{year}{1995}) \bibinfo{pages}{530--548}.
%Type = Unpublished
\bibitem[{Finkel and Gonz{\'a}lez-L\'opez(2014)}]{FG14}
\bibinfo{author}{F.~Finkel}, \bibinfo{author}{A.~Gonz{\'a}lez-L\'opez},
  \bibinfo{year}{2014}. \bibinfo{note}{To be published}.
%Type = Article
\bibitem[{Miles(1960)}]{Mi60}
\bibinfo{author}{E.~P. Miles, Jr}, \bibinfo{journal}{Am. Math. Mon.}
  \bibinfo{volume}{67} (\bibinfo{year}{1960}) \bibinfo{pages}{745--752}.
%Type = Book
\bibitem[{Lawden(1989)}]{La89}
\bibinfo{author}{D.~F. Lawden}, \bibinfo{title}{{E}lliptic {F}unctions and
  {A}pplications}, \bibinfo{publisher}{Springer-Verlag},
  \bibinfo{address}{Berlin}, \bibinfo{year}{1989}.

\end{thebibliography}

\end{document}